\newtheorem{theorem}{Theorem}[section]
\newtheorem{lemma}[theorem]{Lemma}
\newtheorem{corollary}[theorem]{Corollary}
\newtheorem{proposition}[theorem]{Proposition}
\theoremstyle{definition}
\newtheorem{definition}[theorem]{Definition}
\theoremstyle{remark}
\newtheorem{remark}[theorem]{Remark}
\newcommand{\I}{{\mathds{1}}}
\newcommand{\cH}{{\mathcal H}}
\newcommand{\cM}{{\mathcal M}}
\newcommand{\qM}{{\mathfrak{M}}}
\newcommand{\bR}{{\mathbb{R}}}
\newcommand{\Rn}{\mathbb{R}} 
\newcommand{\Nn}{\mathbb{N}} 
\newcommand{\Cn}{\mathbb{C}} 
\numberwithin{equation}{section}
\newcommand{\Tr}{\mathrm{Tr}}
\newcommand{\jed}{{\mathbb{I}}}
\begin{document}

\title[]{On Entropy for general quantum systems}

\author{W. A. Majewski}

\address{Institute of Theoretical Physics and Astrophysics, The Gdansk University, Wita Stwosza 57,\\
Gdansk, 80-952, Poland and Unit for BMI, North-West-University, Potchefstroom, South Africa}
\email{fizwam@univ.gda.pl}

\author{L. E. Labuschagne}

\address{DST-NRF CoE in Math. and Stat. Sci,\\ Unit for BMI,\\ Internal Box 209, School of Comp., Stat., $\&$ Math. Sci.\\
NWU, PVT. BAG X6001, 2520 Potchefstroom\\ South Africa}
\email{Louis.Labuschagne@nwu.ac.za}

\begin{abstract}
In these notes we will give an overview and road map for a definition and characterization of (relative) entropy for both classical and quantum systems. In other words, we will provide a consistent treatment of entropy which can be applied within the recently developed Orlicz space based approach to large systems. This means that the proposed approach successfully provides a refined framework for the treatment of entropy in each of classical statistical physics, Dirac's formalism of Quantum Mechanics, large systems of quantum statistical physics, and finally also for Quantum Field Theory.
\end{abstract}
\date{\today}
\subjclass[2010]{46L51, 28D20, 46E30, 47L90, and 82B30}
\keywords{noncommutative integration, entropy, Orlicz spaces, operator algebras,
statistical thermodynamics}
\thanks{We are very grateful to Fumio Hiai for his comments on the first version of the paper and for bringing Ohya-Petz result on 
the relative entropy to our attention. The contribution of L. E. Labuschagne is based on research partially supported by the National 
Research Foundation (IPRR Grant 96128). Any opinion, findings and conclusions or recommendations expressed in this material, are those 
of the author, and therefore the NRF do not accept any liability in regard thereto.}

\maketitle

Despite the efforts of many authors over a very long period of time, gaining a deeper understanding of entropy remains one of the most 
important and intriguing challenges in the physics of large systems - a challenge still receiving the close attention of many prominent 
authors. See for example \cite{Wit}. In this endeavour the techniques available for the quantum framework still lack the refinement of 
those available for classical systems. On this point, Dirac's formalism for Quantum Mechanics and von Neumann's definition of entropy in 
the context of $B(H)$, does however provide a ``template'' for developing techniques for the description and study of entropy in the 
context of tracial von Neumann algebras. One possible way in which von Neumann's ideas could be refined to provide a ``good'' 
description of states with well-defined entropy in the tracial case, was fully described in \cite{[2]}. As is shown in that paper, a 
successful description of states with entropy can be achieved on condition that the more common framework for quantum theory based on 
the pair of spaces $\langle L^\infty, L^1\rangle$, is replaced with a formalism based on the pair of Orlicz spaces 
$\langle L^{\cosh-1}, L\log(L+1)\rangle$. An important point worth noting (also pointed out in \cite{[2]}), is that this axiomatic shift 
leaves Dirac's formalism intact! However not all quantum systems correspond to tracial von Neumann algebras. (Note for example that the 
local algebras of Quantum Field Theory are type $III_1$.) Hence no formalism for describing and studying entropy is complete, if it 
cannot also find expression in a type III context. In these notes we will provide a formalism for describing (relative) entropy for the 
most general quantum systems. Our approach is to describe relative entropy in terms of modular dynamics, for which a common input 
stemming from the concept of Radon-Nikodym derivatives is crucial. As we shall show in section \ref{sec4}, the theory achieved dovetails 
well with existing concepts of relative entropy \cite{ara1, ara2}, and also allows for a ``density based'' description which faithfully 
mimics the classical formula. Then in section \ref{sec5} we use the theory thus developed as a guideline for introducing a concept of 
entropy for single states of type III von Neumann, before concluding by indicating the way forward. As shall be seen, the definition for 
entropy achieved in section \ref{sec5} harmonises perfectly with the above-mentioned Orlicz space formalism, and is a natural extension 
of the descriptions given in \cite{[2]}. We emphasize that the aforementioned extension demands a regularization procedure which has recently been shown to fit the operator algebraic approach to Quantum Field Theory very naturally, cf \cite{LM}.

\section{Boltzmann's H-functional and (classical) entropy}
Let $\Gamma$ be  a phase space associated with a system. We fix a reference measure $\lambda$ on $\Gamma$, usually it will be the Lebesgue measure. A function $f$ such that $f \in \{ g; g \in L^1(\Gamma, d\lambda), g\geq 0, \int_{\Gamma} g d\lambda = 1 \}\equiv \mathfrak{S}$ defines a probability measure $d\mu = f d \lambda$. In the Boltzmann theory, such an $f$ has the interpretation of velocity distribution function, cf \cite{thompson} see also \cite{ruelle} . On the other hand we note that $g \in \mathfrak{S}$ can be written as 
\begin{equation}
g = \frac{d\mu}{d\lambda}
\end{equation}
where $\frac{d\mu}{d\lambda}$ stands for the Radon-Nikodym derivative. Hence, the Boltzmann $H$-functional can be written as
\begin{equation}
H(g) \equiv \int g \log (g) d\lambda = \int \frac{d\mu}{d\lambda} \log\left(\frac{d\mu}{d\lambda}\right) d\lambda = \mu\left(\log\left(\frac{d\mu}{d\lambda}\right)\right),
\end{equation}
provided that the above integrals exist.
In \cite{[1]}, \cite{[2]}, \cite{[3]} we have argued that for states (probability measures) $\mu$ such that $\frac{d\mu}{d\lambda} \in L\log(L+1) \cap L^1$, the functional $H(\cdot)$ is well defined.
\begin{remark}
As the (classical) continuous entropy $S$ differs from the functional $H$ only by sign, the above means that the entropy
$S(\frac{d\mu}{d\lambda})$ is well defined if $\frac{d\mu}{d\lambda} \in L\log(L+1)\cap L^1$.
\end{remark}

Let $\mu$ and $\nu$ be probability measures over a set $X$, and assume that $\mu$ is absolutely continuous with respect to $\nu$. The relative entropy (also known as Kullback-Leibner divergence) is defined as
\begin{equation}
\label{1.3}
S(\mu|\nu) = \int_X \log \left(\frac{d\mu}{d\nu}\right)d \mu = \int_X \frac{d\mu}{d\nu} \log\left(\frac{d\mu}{d\nu}\right) d\nu \equiv \left\langle\log\frac{d\mu}{d\nu}\right\rangle_{\mu},
\end{equation}
provided that the integrals in the above formulas exist, where $\frac{d\mu}{d\nu}$ is the Radon-Nikodym derivative of $\mu$ with respect to $\nu$. Assume additionally that $\nu$ (so also $\mu$ is absolutely continuous with respect to the reference measure $\lambda$. Then
\begin{equation}
\frac{d\mu}{d\nu} = \frac{d\mu}{d\lambda} \cdot \frac{d\lambda}{d\nu},
\end{equation}
and under some additional assumptions one has the more familiar formula for the relative entropy
\begin{equation}
\label{1.5}
S(\mu|\nu) = \int_X p \log \frac{p}{q} d\lambda,
\end{equation}
where $p = \frac{d\mu}{d\lambda}$ and $q = \frac{d\nu}{d\lambda}$.

Intuitively, it is easily seen that for a discrete case, the entropy of a random variable $f$ on $X$ with a probability distribution $p(x)$ is related to how much $p(x)$ diverges from the uniform distribution on the support of $f$. In particular, putting $q=1$ in the formula \ref{1.5} one gets
\begin{equation}\label{1.6}
S(\mu|\tau) = H(p),
\end{equation} 
where the (non-normalized) functional $\tau$ is defined by the reference measure $\lambda$. As ``uniformity'' can be related to the ``most'' chaotic state (each microstate is equally probable), the basic property of statistical entropy expressing how far the given state is from the most chaotic, is recovered.

To clarify this point as well as to gain some intuition for a noncommutative generalization, we turn to the algebraic approach to the just defined concepts.
For a fixed measure space $(X, \Sigma, \lambda)$, let $L^{\infty}(X, \Sigma, \lambda) \equiv L^{\infty}$ denote the set of all $\lambda$-measurable, essentially bounded functions on $X$. The absolute continuity of $\mu$ with respect to $\lambda$ is equivalent to the condition that $\mu$ can be regarded as a normal functional on $L^{\infty}(X, \Sigma, \lambda)$, cf \cite{Ber} Theorem 1 , p. 167.
Since $L^{\infty}(X, \Sigma, \lambda)$ is the prototype of abelian von Neumann algebras, one can rewrite definitions as well as the basic properties of the above concepts in (abelian) von Neumann algebraic terms.

To this end, let $\vartheta_{\mu}(f) = \int_X f \cdot (\frac{d\mu}{d \lambda}) d\lambda(x)$ denote the functional over $L^{\infty}(X, \Sigma, \lambda)$, for the reference measure $\lambda$. In particular, the trace $\tau$ over $L^{\infty}(X, \Sigma, \lambda)$ is given by
\begin{equation}
\tau(f) = \int_X f d\lambda(x).
\end{equation}

It is worth pointing out that the existence of such a trace affords the possibility of dealing with uniform distribution (as was indicated above). In other words, such existence affords the possibility of discussing the relation between entropy and relative entropy!
Consequently, the entropy formula can be given as
\begin{equation}
S(\mu) \equiv S(\vartheta_{\mu}) = \tau\left(\left(\frac{D\vartheta_{\mu}}{D\tau}\right) \log \left(\frac{D\vartheta_{\mu}}{D\tau}\right)\right) \equiv
\int_X \left(\frac{D\vartheta_{\mu}}{D\tau}\right) \log \left(\frac{D\vartheta_{\mu}}{D\tau}\right) d\lambda(x) 
\end{equation}
$$ = \left\langle \log\left(\frac{D\vartheta_{\mu}}{D\tau}\right)\right\rangle_{\mu},$$
while the relative entropy formula reads
\begin{equation}\label{commcase}
S(\vartheta_{\mu}|\vartheta_\nu) = \left\langle \log\left(\frac{D\vartheta_{\mu}}{D\vartheta_\nu}\right)\right\rangle_{\mu},
\end{equation}
where $\frac{D\vartheta_{\mu}}{D\vartheta_\nu}$ stands for the Radon-Nikodym derivative of functional $\vartheta_{\mu}$
with respect to the functional $\vartheta_\nu$, see the next section.

\begin{remark} \textit{Classical equilibrium thermodynamics.}
\label{classical thermodynamics}
To get some better intuition, let us consider specific case, when the velocity distribution function $\frac{d\mu}{d\lambda}$ is given by Maxwell-Boltzmann distribution
\begin{equation}
\frac{d\mu}{d\lambda} = Z e^{- \beta H} = e^{log Z - \beta H} \equiv e^K,
\end{equation}
where $Z$ is the normalization constant, $\beta >0$ (usually interpreted as ``the inverse temperature''), and $H$ is the Hamiltonian of the system under consideration.

For such cases, the above formulas for entropies read
\begin{equation}
\label{1.11}
S(\frac{d\mu}{d\lambda}) \equiv S(\frac{D\upsilon_{\mu}}{D\tau}) = \langle \log(e^K)\rangle_{\mu} = \langle K \rangle_{\mu},
\end{equation}
while for the relative entropy of $\frac{d\mu}{d\lambda} = Z_1 e^{- \beta_1H_1} \equiv e^{K_1}$, $\frac{d\nu}{d\lambda} = Z_2 e^{- \beta_2H_2} \equiv e^{K_2}$, one has
\begin{equation}
S\left(\frac{d\mu}{d\lambda}|\frac{d\nu}{d\lambda}\right) = \langle \log \frac{e^{K_1}}{e^{K_2}}\rangle_{\mu} = \langle K_1\rangle_{\mu} - \langle K_2\rangle_{\mu}.
\end{equation}
It is \textit{important to note that (\ref{1.11}) is in perfect agreement with the second law of thermodynamics}; see section 32 in \cite{chin}.
The above formulas can be rewritten as
\begin{equation}
S\left(\frac{d\mu}{d\lambda}\right) = -i\frac{d}{dt}\langle e^{itK}\rangle_{\mu}|_{t=0}
\end{equation}
and
\begin{equation}
S\left(\frac{d\mu}{d\lambda} | \frac{d\nu}{d\lambda}\right) = -i \frac{d}{dt}\langle e^{itK_1} e^{-itK_2}\rangle_{\mu}|_{t=0}
\end{equation}

As it will be seen in the next Sections, the above recipe can easily be generalized and quantized.
\end{remark}
To clarify the significance of derivatives and to proceed with our exposition we need some preliminaries, which for the reader's convenience will be given in a separate section.


 
\section{Algebraic preliminaries}
As the concepts of entropy and relative entropy involve Radon-Nikodym derivatives, for the reader convenience, we here provide the relevant material on noncommutative Radon-Nikodym and cocycle derivatives, thus making
our exposition self-contained. The theory of such cocycles goes back to \cite{Con1}, \cite{Con2}, \cite{ConTak}.
In particular, Connes proved, see \cite{Con1}
\begin{theorem}
\label{connes}
Let $\qM$ be a von Neumann algebra and $\phi$, $\psi$ faithful semifinite normal weights on $\qM$. Then there exists a $\sigma$-strongly continuous one parameter family $\{u_t\}$ of unitaries in $\qM$ with the following properties:
\begin{itemize} 
\item $u_{t+t^{\prime}} = u_t \sigma_t^{\phi}(u_{t^{\prime}}), \mbox{ for all } t, t^{\prime} \in \Rn,$
\item
\begin{equation}
\label{2.1a}
\sigma^{\psi}_t (x) = u_t \sigma_t^{\phi}(x)u_t^*, \mbox{ for all } x \in \qM, t \in \Rn,
\end{equation}
\item a unitary $u\in \qM$ satisfies $\psi(x)=\phi(uau^*)$ for all $x\in \qM$, if and only if $u_t=u^*\sigma^\phi_t(u)$ for all $t\in \mathbb{R}$,
\end{itemize}
where $\sigma_t^{\varphi}$ ($\sigma_t^{\psi}$) stands for the modular evolution determined by $\varphi$ ($\psi$ respectively).
\end{theorem}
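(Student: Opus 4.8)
The plan is to reduce everything to a single cleverly chosen weight via Connes' balanced-weight trick. First I would form the $2\times 2$ amplification $N = \qM \otimes M_2(\Cn)$, write $e_{ij}$ for the matrix units $1\otimes E_{ij}$, and define the \emph{balanced weight}
\begin{equation*}
\theta\left(\begin{pmatrix} x_{11} & x_{12} \\ x_{21} & x_{22}\end{pmatrix}\right) = \phi(x_{11}) + \psi(x_{22}).
\end{equation*}
A routine check shows that $\theta$ is again a faithful semifinite normal weight on $N$, so Tomita--Takesaki theory furnishes its modular automorphism group $\sigma_t^\theta$, and the whole theorem will be extracted from the single object $\sigma_t^\theta$.

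The key structural step is to show that $\sigma_t^\theta$ is ``diagonal''. I would first verify that each diagonal projection $e_{ii}$ lies in the centralizer of $\theta$: indeed $\theta(e_{11}x)=\phi(x_{11})=\theta(xe_{11})$, and similarly for $e_{22}$, so that $\sigma_t^\theta(e_{ii})=e_{ii}$. Consequently $\sigma_t^\theta$ preserves each corner $e_{ii}Ne_{jj}$, and its restriction to $e_{11}Ne_{11}\cong\qM$ (resp. $e_{22}Ne_{22}\cong\qM$) must be the modular group of the reduced weight $\theta|_{e_{11}Ne_{11}}=\phi$ (resp. $=\psi$); hence $\sigma_t^\theta$ acts as $\sigma_t^\phi$ on the upper-left corner and as $\sigma_t^\psi$ on the lower-right corner. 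This invocation of the behaviour of the modular group under reduction by a projection in the centralizer is, I expect, the main technical obstacle: it is exactly the point where the full force of Tomita--Takesaki theory for weights (the reduction theory for modular automorphisms) is needed, rather than elementary bookkeeping.

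With the diagonality in hand the remaining assertions are essentially formal. Since $\sigma_t^\theta$ maps the $(2,1)$-corner into itself and fixes $e_{11},e_{22}$, there is a unique $u_t\in\qM$ with $\sigma_t^\theta(e_{21})=u_t e_{21}$, and comparing source and range projections shows that $u_t$ is unitary, while $\sigma$-strong continuity of $t\mapsto u_t$ is inherited from that of the modular group. Applying the group law $\sigma_{t+t'}^\theta=\sigma_t^\theta\circ\sigma_{t'}^\theta$ to $e_{21}$, and using that $\sigma_t^\theta$ acts as $\sigma_t^\phi$ on the first corner, yields the cocycle identity $u_{t+t'}=u_t\sigma_t^\phi(u_{t'})$; writing $x_{22}=e_{21}x_{11}e_{12}$ and applying $\sigma_t^\theta$ gives, after a short computation with matrix units, the intertwining relation $\sigma_t^\psi(x)=u_t\sigma_t^\phi(x)u_t^*$.

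Finally, for the characterization I would combine this intertwining relation with the covariance formula $\sigma_t^{\phi\circ\mathrm{Ad}(u)}=\mathrm{Ad}(u^*)\circ\sigma_t^\phi\circ\mathrm{Ad}(u)$, where $\mathrm{Ad}(u)(x)=uxu^*$. If $\psi=\phi\circ\mathrm{Ad}(u)$, then $w_t:=u^*\sigma_t^\phi(u)$ satisfies $w_t\sigma_t^\phi(x)w_t^*=u^*\sigma_t^\phi(uxu^*)u=\sigma_t^\psi(x)$, so $w_t$ implements the same intertwining and, being the cocycle of the pair $(\psi,\phi)$, must coincide with $u_t$ by uniqueness of the Radon--Nikodym cocycle. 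Conversely, if $u_t=u^*\sigma_t^\phi(u)$, then the intertwining relation forces $\sigma_t^\psi=\mathrm{Ad}(u^*)\circ\sigma_t^\phi\circ\mathrm{Ad}(u)=\sigma_t^{\phi\circ\mathrm{Ad}(u)}$, so $\psi$ and $\phi\circ\mathrm{Ad}(u)$ share both modular group and cocycle relative to $\phi$ and hence coincide. The only delicate point here is the injectivity of the assignment $\psi\mapsto u_t$, which is again a consequence of the uniqueness part of the theory and should be cited rather than reproved.
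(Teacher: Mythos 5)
The paper offers no proof of this statement: it is quoted as Connes' theorem with a citation to [Con1] (see also Theorem VIII.3.3 of [Tak2]), so there is no in-paper argument to compare yours against. Your proposal is the standard balanced-weight proof --- in fact it is Connes' original argument --- and the outline is sound: the weight $\theta$ on $\qM\otimes M_2(\Cn)$ is faithful, semifinite and normal; the diagonal matrix units lie in its centralizer, so $\sigma_t^\theta$ preserves the corners and restricts on the diagonal corners to $\sigma_t^\phi$ and $\sigma_t^\psi$; defining $u_t$ by $\sigma_t^\theta(e_{21})=u_t e_{21}$ and applying $\sigma_t^\theta$ to $e_{21}e_{21}^*=e_{22}$, $e_{21}^*e_{21}=e_{11}$, to $\sigma_{t'}^\theta(e_{21})=(1\otimes E_{21})(u_{t'}\otimes E_{11})$, and to $x\otimes E_{22}=e_{21}(x\otimes E_{11})e_{12}$ yields unitarity, the cocycle identity and the intertwining relation exactly as you describe. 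The two places where you correctly defer to the literature are the real content: (i) the fact that a projection $e$ satisfying $\theta(e\,\cdot)=\theta(\cdot\,e)$ is fixed by $\sigma^\theta_t$ and that the modular group of the reduced weight is the restriction of $\sigma^\theta_t$ to $eNe$ --- for weights (rather than states) this is the Pedersen--Takesaki reduction theory and is where Tomita--Takesaki theory is genuinely used; and (ii) the injectivity of $\psi\mapsto (D\psi:D\phi)$, without which the converse direction of the third bullet does not close (equality of modular groups alone does not force equality of weights). For the forward direction of that bullet it is cleaner to note that the balanced weight of $(\phi,\phi\circ\mathrm{Ad}(u))$ equals $\theta_0\circ\mathrm{Ad}(V)$ with $V=\mathrm{diag}(1,u)$ and $\theta_0$ the balanced weight of $(\phi,\phi)$, whence $\sigma_t^\theta(e_{21})=V^*\sigma_t^{\theta_0}(Ve_{21}V^*)V=u^*\sigma_t^\phi(u)\otimes E_{21}$ directly, rather than arguing via the intertwining relation (which by itself does not single out the cocycle). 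Finally, note the statement as printed contains a typo --- $\phi(uau^*)$ should read $\phi(uxu^*)$ --- and your reading $\psi=\phi\circ\mathrm{Ad}(u)$ is the intended one.
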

\begin{definition}
The family of unitaries described by the above theorem is called the cocycle derivative of $\varphi$ with respect to
$\psi$ and is denoted by
\begin{equation}
(D\varphi : D\psi)_t = u_t.
\end{equation}
\end{definition}
To understand fully the next remark we need, cf \cite{BR}, Theorem 5.3.10.
\begin{theorem} (\textit{Takesaki})
\label{takesaki}
Let $\qM$ be a von Neumann algebra, and $\omega$ a normal state on $\qM$. The following are equivalent:
\begin{enumerate}
\item $\omega$ is a faithful as a state on $\pi_{\omega}(\qM)$, i.e. there exists a projector $E \in \qM \cap\qM^{\prime}$ such that $\omega(\jed - E) = 0$ and $\omega|_{\qM E}$ is a faithful state.
\item There exists a $\sigma$-weakly continuous one-parameter group $\sigma$ of ${}^*$-automorphisms of $\qM$ such that $\omega$ is $\sigma$-KMS state.
\end{enumerate}
Moreover, $\sigma|{\qM E}$ is the modular group of $\qM E$ associated with $\omega$.
\end{theorem}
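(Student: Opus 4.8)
The plan is to route the entire argument through the GNS construction associated with $\omega$, and to recognize condition (1) as the assertion that the canonical cyclic vector is also separating, so that Tomita--Takesaki modular theory applies. Write $(\cH_\omega, \pi_\omega, \Omega_\omega)$ for the GNS triple and put $\cN = \pi_\omega(\qM)''$; since $\omega$ is normal, $\pi_\omega$ is a normal representation. The vector $\Omega_\omega$ is always cyclic for $\cN$. The kernel of $\pi_\omega$ is a $\sigma$-weakly closed two-sided ideal, hence of the form $\qM(\jed - E)$ for a unique central projection $E \in \qM \cap \qM'$; one checks directly that $\omega(\jed - E) = 0$ and that, under the induced isomorphism $\qM E \cong \cN$, the restriction $\omega|_{\qM E}$ corresponds to the vector state $\langle \Omega_\omega, \,\cdot\,\Omega_\omega\rangle$ on $\cN$. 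That vector state is faithful precisely when $\Omega_\omega$ is separating for $\cN$. Thus condition (1) is equivalent to $\Omega_\omega$ being cyclic and separating for $\cN$.

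Granting this reformulation, I would first prove (1) $\Rightarrow$ (2). Assuming $\Omega_\omega$ cyclic and separating, Tomita--Takesaki theory furnishes the modular operator $\Delta$ and the modular automorphism group $\sigma_t^\omega = \mathrm{Ad}(\Delta^{it})$ on $\cN$, for which $\langle \Omega_\omega, \,\cdot\,\Omega_\omega\rangle$ satisfies the KMS condition at the conventional inverse temperature. Transporting $\sigma^\omega$ back through $\qM E \cong \cN$ defines a $\sigma$-weakly continuous group on $\qM E$, which I extend to $\qM = \qM E \oplus \qM(\jed - E)$ by letting it act as the identity on the complementary summand. Since $\omega$ vanishes on $\qM(\jed - E)$ and restricts to a KMS state on $\qM E$, the extended group $\sigma$ has $\omega$ as a KMS state, and by construction $\sigma|_{\qM E}$ is the modular group of $\qM E$ associated with $\omega$, settling the ``Moreover'' clause for the group just produced.

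For (2) $\Rightarrow$ (1) I would argue that the KMS condition by itself forces the separating property. Given a $\sigma$-weakly continuous group $\sigma$ for which $\omega$ is KMS, the standard analysis of the KMS condition in the GNS representation produces a densely defined antilinear operator on $\cN \Omega_\omega$ whose closure carries the modular data; the crucial output is that $\Omega_\omega$ is separating for $\cN$, which by the first paragraph is exactly condition (1). It then remains to treat the uniqueness implicit in the ``Moreover'' statement, namely that any $\sigma$ satisfying (2) restricts on $\qM E$ to the modular group: this follows from the fact that a faithful normal state is KMS for at most one $\sigma$-weakly continuous one-parameter automorphism group, so the group supplied by (2) and the modular group of the first paragraph must agree on $\qM E$.

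The genuine obstacle is the direction (2) $\Rightarrow$ (1) together with this uniqueness, since together they amount to a converse of the Tomita--Takesaki theorem: one must extract separating-ness and the identification with the modular group purely from the analytic KMS boundary condition, rather than building everything from a given separating vector. Controlling the domains of the unbounded operators implicitly attached to the KMS boundary values, and verifying that the resulting involution is closable with the correct polar decomposition, is where the technical weight lies; the support-projection bookkeeping of the first paragraph is, by contrast, routine once the central ideal structure of $\ker \pi_\omega$ is in hand.
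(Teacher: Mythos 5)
The paper offers no proof of this statement at all --- it is quoted directly from Bratteli--Robinson, Theorem 5.3.10 --- so there is no in-paper argument to compare against; your outline is precisely the standard proof from that source (reformulate (1) as $\Omega_\omega$ being cyclic and separating for $\cN=\pi_\omega(\qM)''$ via the central support of $\ker\pi_\omega$, run Tomita--Takesaki for $(1)\Rightarrow(2)$, and use the KMS condition to recover the separating property and the uniqueness of the KMS group for $(2)\Rightarrow(1)$ and the ``Moreover'' clause), and it is correct in structure. One small calibration: for the bare implication $(2)\Rightarrow(1)$ no unbounded-operator or closability analysis is needed --- if $A\Omega_\omega=0$ then $\omega(DA)=0$ for all $D\in\cN$, so each KMS interpolating function $F_{A,C}$ vanishes on one boundary of the strip and hence identically by Schwarz reflection, giving $\omega(C^*A B)=0$ for all $B,C$ and thus $A=0$ by cyclicity; the domain control you describe is genuinely needed only for identifying the given group with $\Delta^{it}$ on $\qM E$, i.e.\ for the uniqueness statement you invoke, which is indeed where the real content of Takesaki's theorem sits.
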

This theorem legitimises the application of KMS theory to our approach to quantum entropy. In other words, our scheme is related to \textit{quantum equilibrium thermodynamics}. Now we are in position to present
\begin{remark}
\label{2.4a}
We note that $u_0 = \jed$ (see the proof of Theorem 3.3, Chapter  VIII in \cite{Tak2}). Further, let us take (formally) a derivation of (\ref{2.1a}) at $t=0$. Then, denoting the infinitesimal generator of $\sigma_t^{\psi}$ ($\sigma^{\varphi}_t$) by $L^{\psi}$ ($L^{\varphi}$ respectively) one gets
\begin{equation}
\label{derivation}
L^{\psi}(x) = \left.\frac{du_t}{dt}\right\bracevert_{t=0} x + L^{\varphi} (x) + x \left(\left.\frac{du_t}{dt}\right\bracevert_{t=0}\right)^*,
\end{equation} 
or equivalently
\begin{equation}
L^{\psi}(x) - L^{\varphi} (x) = \left.\frac{du_t}{dt}\right\bracevert_{t=0} x + x \left(\left.\frac{du_t}{dt}\right\bracevert_{t=0}\right)^*.
\end{equation}
Theorem \ref{takesaki} implies that the modular evolution for a fixed faithful normal state $\varphi$ on $\qM$, can be interpreted as Hamilton type dynamics for the equilibrium (KMS) state on $\qM$.
This means that the derivative of $u_t$ at $t=0$ determines the difference of two ``equilibrium'' type generators $L^{\psi}$ and $L^{\varphi}$.
The important point to note here, is the fact that in general, $L^{\psi}$ and $L^{\varphi}$ are unbounded derivations. Thus, the equality (\ref{derivation}) is not well defined for each $x$. This clearly indicates that derivatives of $u_t$ should be studied carefully, and this will be done in the ensuing sections.

To say more, let $\psi$ be a perturbed $\varphi$-state, so $\psi \equiv \varphi^P$; for all details see section 5.4.1 in \cite{BR}. In particular, for $P\in \qM$ there exists an explicit form of $u_t$. Furthermore, it is easy to note that $L^{\varphi^P}x - L^{\varphi}x = i[P,x]$, which is well defined. Consequently, comparing two states which differ by finite a energy perturbation, does not lead to any problem.

Finally, we note that KMS states can be characterized by passivity, see \cite{PW} and/or section 5.4.4 in \cite{BR}.
We remind that among other things passivity ensures compatibility with the second law of thermodynamics. Therefore, our scheme based on Tomita-Takesaki theory, seems to be a natural quantization  
of the classical case presented in Remark \ref{classical thermodynamics}.
\end{remark}

The Radon-Nikodym theorem used in the previous section has generalizations to general von Neumann algebras. The first generalization,
for traces,  is extracted from Pedersen's book \cite{Peder}, see Theorem 5.3.11 and remarks in 5.3.12.

\begin{theorem}
\label{2.1}
Let $\tau$ be a normal semifinite trace over $\mathfrak{M}$. For each normal semifinite weight $\psi$ on $\mathfrak{M}$ 
which is absolutely continuous with respect to $\tau$ in the sense that for any $a \in \mathfrak{M}$, the fact that $\tau(a^*a) = 0$ implies $\psi(a^*a)=0$, there exists a unique positive operator $h$ on $\cH_{\tau}$ ($\cH_{\tau}$ is GNS space for $(\mathfrak{M}, \tau))$ such that
\begin{equation}
\psi(x) = \tau(h x).
\end{equation}
\end{theorem}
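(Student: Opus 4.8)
The plan is to realise $\psi$ through the Gelfand--Naimark--Segal (GNS) construction for the trace $\tau$ and to exploit the symmetry between left and right multiplication that is peculiar to a trace. Write $\mathfrak{n}_\tau=\{x\in\mathfrak{M}:\tau(x^*x)<\infty\}$ and let $\Lambda_\tau:\mathfrak{n}_\tau\to\cH_\tau$ be the canonical embedding into the GNS space, so that $\langle\Lambda_\tau(x),\Lambda_\tau(y)\rangle=\tau(y^*x)$ and $\Lambda_\tau(\mathfrak{n}_\tau)$ is dense (here semifiniteness of $\tau$ is used). Denote by $\pi_\tau$ the left regular representation and by $J$ the canonical conjugation $J\Lambda_\tau(x)=\Lambda_\tau(x^*)$; because $\tau$ is a trace one has the antiunitary implementation $J\pi_\tau(\mathfrak{M})J=\pi_\tau(\mathfrak{M})'$, with the commutant given precisely by the right multiplications $\rho(a)=J\pi_\tau(a^*)J$, $\rho(a)\Lambda_\tau(x)=\Lambda_\tau(xa)$. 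On $\Lambda_\tau(\mathfrak{n}_\tau)$ I would introduce the (a priori unbounded) positive quadratic form $q(\Lambda_\tau(x))=\psi(x^*x)$. The absolute continuity hypothesis is exactly what makes $q$ well defined: if $\Lambda_\tau(x)=0$, i.e. $\tau(x^*x)=0$, then $\psi(x^*x)=0$, and Cauchy--Schwarz for $\psi$ propagates this to all mixed terms.

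The crux is to show that the operator produced from $q$ lives in the commutant, and this is where the trace property enters decisively. Suppose first $\psi\le\lambda\tau$ for some $\lambda>0$; then $q$ is bounded and, by the Riesz lemma, there is a bounded positive $T_0$, $0\le T_0\le\lambda$, with $\langle T_0\Lambda_\tau(x),\Lambda_\tau(y)\rangle=\psi(y^*x)$. A direct computation shows $T_0$ commutes with every $\pi_\tau(a)$: indeed $\langle T_0\pi_\tau(a)\Lambda_\tau(x),\Lambda_\tau(y)\rangle=\psi(y^*ax)=\langle T_0\Lambda_\tau(x),\pi_\tau(a^*)\Lambda_\tau(y)\rangle$, so $T_0\in\pi_\tau(\mathfrak{M})'$. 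Hence $T_0=\rho(h)$ for a unique $h\in\mathfrak{M}$, and $h\ge 0$ since $\rho$ is a positivity-preserving $*$-anti-isomorphism onto the commutant. Unwinding the definition and using that $\tau$ is a trace gives $\psi(y^*x)=\langle\rho(h)\Lambda_\tau(x),\Lambda_\tau(y)\rangle=\tau(hy^*x)$; choosing $y$ from an approximate unit in $\mathfrak{n}_\tau\cap\mathfrak{n}_\tau^*$ and invoking normality of $\psi$ then yields $\psi(x)=\tau(hx)$ on a $\sigma$-weakly dense ideal, as required.

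For a general normal semifinite weight $\psi$ the operator $h$ will be unbounded, so I would run the argument directly at the level of the closed form. The normality of $\psi$ makes $q$ lower semicontinuous, hence closable, and the representation theorem for closed positive forms furnishes a positive self-adjoint $K\ge 0$ on $\cH_\tau$ with $q(\xi)=\|K^{1/2}\xi\|^2$. The same commutation computation as above---now read off on the form level---shows $K$ is affiliated with $\pi_\tau(\mathfrak{M})'$, so that $K^{1/2}=\rho(h^{1/2})$ for a positive self-adjoint $h$ affiliated with $\mathfrak{M}$; then $\psi(x^*x)=\|\rho(h^{1/2})\Lambda_\tau(x)\|^2=\tau(h^{1/2}x^*xh^{1/2})=\tau(hx^*x)$, and polarisation gives $\psi(\cdot)=\tau(h\,\cdot\,)$. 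Uniqueness is immediate: if $\tau(h_1 z)=\tau(h_2 z)$ on the defining ideal, faithfulness of $\tau$ on its support forces $h_1=h_2$. I expect the genuine obstacle to be this last, semifinite step---namely verifying that the self-adjoint operator extracted from the closed form is honestly affiliated with $\mathfrak{M}$ (equivalently with $\mathfrak{M}'$) and that all the trace identities survive the passage to unbounded $h$; the bounded, dominated case and the commutant identification are comparatively routine once the trace symmetry is in hand.
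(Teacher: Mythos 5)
The paper does not actually prove this statement: it is imported verbatim from Pedersen's book (Theorem 5.3.11 and the remarks in 5.3.12 of \cite{Peder}), so there is no in-paper argument to measure you against. Judged on its own terms, your treatment of the dominated case $\psi\leq\lambda\tau$ is correct and is the standard argument: the Riesz lemma produces $T_0$, the computation $\langle T_0\pi_\tau(a)\Lambda_\tau(x),\Lambda_\tau(y)\rangle=\psi(y^*ax)$ places $T_0$ in $\pi_\tau(\mathfrak{M})'=J\pi_\tau(\mathfrak{M})J$, and traciality converts $\langle\rho(h)\Lambda_\tau(x),\Lambda_\tau(y)\rangle$ into $\tau(hy^*x)$. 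The well-definedness of $q$ via Cauchy--Schwarz is also exactly where the absolute continuity hypothesis is consumed.

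The genuine gap is the sentence ``the normality of $\psi$ makes $q$ lower semicontinuous, hence closable.'' The lower semicontinuity you need is with respect to the \emph{norm of} $\cH_\tau$, whereas normality only gives $\sigma$-weak lower semicontinuity of $\psi$ on $\mathfrak{M}_+$. Convergence $\Lambda_\tau(x_n)\to\Lambda_\tau(x)$ in $\cH_\tau$ gives no control whatsoever on the operators $x_n^*x_n$ in the $\sigma$-weak topology (the $x_n$ need not even be bounded in operator norm), so the implication is not established; and a densely defined positive form is in general \emph{not} closable, so this cannot be waved through. Any honest repair essentially amounts to the missing half of the theorem. Two standard routes: (i) use Haagerup's theorem to write $\psi=\sup_i\varphi_i$ over the normal positive functionals it dominates, represent each $\varphi_i$ as a countable sum of vector functionals $\omega_{\eta_k}$ in $\pi_\tau$, observe that each map $\Lambda_\tau(x)\mapsto\pi_\tau(x)\eta_k$ is a closable (right-multiplication) operator so that each $q_{\varphi_i}$ has a lower semicontinuous closed extension, and then take the supremum of these l.s.c.\ forms, which is l.s.c.\ hence the form of a positive self-adjoint operator; or (ii) the route closer to Pedersen's, namely exhaust $\psi$ by an increasing net of pieces each dominated by a multiple of $\tau$, apply your bounded case to each to get an increasing net $h_i\in\mathfrak{M}_+$, and define $h$ as the supremum in the extended positive part, using normality of $\tau$ to pass to the limit in $\tau(h_i x)\nearrow\tau(hx)$. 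You correctly flag this step as the likely obstacle, but as written the proposal does not surmount it; note also that for unbounded $h$ the very meaning of $\tau(hx)$ (as $\tau(h^{1/2}xh^{1/2})$ or a limit over spectral cut-offs) has to be fixed before the uniqueness argument can be run.
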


For a general von Neumann algebra $\mathfrak{M}$ and two normal faithful semifinite weights such that one dominates the other one has (see Theorem VIII.3.17 in \cite{Tak2})

\begin{theorem}
\label{2.2}
For a pair $\vartheta$, $\psi$ of faithful semifinite normal weights on $\qM$, the following conditions are equivalent:
\begin{enumerate}
\item There exists $M>0$ such that
\begin{equation}
\vartheta(x) \leq M \psi(x), \quad x \in \qM_+,
\end{equation}
\item The cocycle derivative $({D\psi}\colon{D\vartheta})_t \equiv u_t$ can be extended to an $\qM$-valued $\sigma$-weakly continuous bounded function on the horizontal strip $\overline{D}_{\frac{1}{2}} = \{ z \in \Cn; -\frac{1}{2}\leq Im(z) \leq 0 \}$, which is holomorphic in the interior of the strip.
\end{enumerate}
If these conditions hold, then
\begin{equation}
\vartheta (x) = \psi(u^*_{-\frac{i}{2}} x u_{-\frac{i}{2}}), \quad x\in \{ \sum_i^n y^*_ix_i; \quad x_i,y_j \in n_{\psi}\},
\end{equation}
where $n_{\psi} = \{ x\in \qM; \psi(x^*x) < \infty \}$.
\end{theorem}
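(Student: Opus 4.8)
The plan is to reduce this two-weight problem to a single modular structure by means of the \emph{balanced weight} construction, and then to extract both the analytic continuation of the cocycle and the Radon--Nikodym type formula from the KMS condition of that single weight. Concretely, I would pass to the von Neumann algebra $N = \qM \otimes M_2(\Cn)$ of $2\times 2$ matrices $[x_{ij}]$ over $\qM$, with matrix units $e_{ij}$, and introduce the faithful semifinite normal weight
\begin{equation}
\mu([x_{ij}]) = \psi(x_{11}) + \vartheta(x_{22}).
\end{equation}
The standard computation of the modular group of a balanced weight (itself a consequence of Theorem \ref{connes}) shows that $\sigma^{\mu}_t$ restricts to $\sigma^{\psi}_t$ on the $(1,1)$-corner and to $\sigma^{\vartheta}_t$ on the $(2,2)$-corner, while on the off-diagonal it reproduces precisely the cocycle $u_t = (D\psi\colon D\vartheta)_t$ of the statement (with the opposite corner carrying its adjoint). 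In this way the whole question is transported into the single modular flow $\{\sigma^{\mu}_t\}$ and its analytic generator, where the cocycle appears as an honest matrix component of $\sigma^{\mu}_t$.

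Next I would translate the two conditions into statements about $\mu$. For $x \in n_{\psi}$ the element $a = x\otimes e_{21}$ lies in $n_{\mu}$, and one computes $\mu(a^*a) = \psi(x^*x)$ while $\mu(aa^*) = \vartheta(xx^*)$; writing $\eta_{\mu}$ for the GNS embedding and $\Delta_{\mu}$ for the modular operator of $\mu$, we have $\Delta_{\mu}^{it}\eta_{\mu}(a) = \eta_{\mu}(\sigma^{\mu}_t(a))$, which carries the factor $u_t$. The KMS condition for $\mu$ guarantees that the matrix coefficients $t \mapsto \langle \sigma^{\mu}_t(a)\,\xi,\zeta\rangle$ are analytic on the open strip, and the plan is to control their boundary behaviour: on the real axis the $u_t$ are unitary, so these coefficients are uniformly bounded there, while the hypothesis (1), $\vartheta \le M\psi$, bounds the relevant quantities on the lower boundary $\mathrm{Im}(z) = -\tfrac12$. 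A Phragm\'en--Lindel\"of (three-lines) argument then forces a uniform bound throughout $\overline{D}_{1/2}$, delivering a $\sigma$-weakly continuous bounded holomorphic $\qM$-valued extension of $t \mapsto u_t$ and, in particular, the bounded operator $u_{-i/2}$ with $\|u_{-i/2}\|^2 \le M$. This is the substance of (1) $\Rightarrow$ (2).

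To obtain the displayed identity, and with it the reverse implication, I would evaluate the continued function at the lower corner $t = -i/2$ using the KMS boundary relation for $\mu$ applied to products of the off-diagonal units. Carrying the boundary value of $t\mapsto\sigma^{\mu}_t(x\otimes e_{21})$ through the balanced-weight bookkeeping converts $\vartheta(x^*x)$ into an expression in $\psi$ built from $u_{-i/2}$, yielding $\vartheta(x^*x) = \psi\big(u^*_{-i/2}\,x^*x\,u_{-i/2}\big)$; polarization then extends this to $\vartheta(x) = \psi\big(u^*_{-i/2}\,x\,u_{-i/2}\big)$ for all $x$ in the span $\{\sum_i y_i^* x_i : x_i, y_j \in n_{\psi}\}$. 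The converse (2) $\Rightarrow$ (1) is then immediate once the formula is known to hold with $u_{-i/2}$ bounded: the elementary operator inequality $y^* x y \le \|y\|^2 x$ (valid for every $x \in \qM_+$) gives $\vartheta(x) = \psi(u^*_{-i/2}\,x\,u_{-i/2}) \le \|u_{-i/2}\|^2\,\psi(x)$, which is (1) with $M = \|u_{-i/2}\|^2$.

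The main obstacle is the analytic-continuation step: a priori $u_t$ is defined, and unitary, only for real $t$, and since $\Delta_{\mu}$ is generally unbounded the continuation of $\Delta_{\mu}^{it}$ off the real axis is a genuine domain question, controlled entirely by the quantitative bound extracted from $\vartheta \le M\psi$. Making the three-lines estimate rigorous, and transferring the bound faithfully between the weight inequality, the $\Delta_{\mu}^{1/2}$-domain estimate, and the operator norm of $u_{-i/2}$, is the technical heart; it also forces careful attention to semifiniteness, which is exactly why the final identity is asserted only on the dense set $\{\sum_i y_i^* x_i : x_i, y_j \in n_{\psi}\}$ rather than on all of $\qM$. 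As a consistency check, in the tracial setting of Theorem \ref{2.1} the modular groups are trivial, the cocycle reduces to $\rho^{it}$ for $\rho$ the Radon--Nikodym derivative relating $\vartheta$ and $\psi$, so that $u_{-i/2} = \rho^{1/2}$ and the formula collapses to the symmetric tracial expression $\psi(\rho^{1/2} x \rho^{1/2})$, recovering the classical Radon--Nikodym identity.
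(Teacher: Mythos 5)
The paper itself contains no proof of this statement: it is quoted (as the text says) from Theorem VIII.3.17 of \cite{Tak2}, so your proposal can only be measured against the standard Connes--Takesaki argument. On that score your architecture is essentially the right one: the balanced weight $\mu=\psi\oplus\vartheta$ on $\qM\otimes M_2(\Cn)$, the identification of $u_t$ as the off-diagonal matrix component of $\sigma^\mu_t$, the translation of $\vartheta\le M\psi$ into the statement that the vectors $\eta_\mu(x\otimes e_{21})$ lie in the domain of $\Delta_\mu^{1/2}$ with a norm bound, and the three-lines estimate producing the bounded holomorphic extension with $\|u_{-i/2}\|^2\le M$ --- this is precisely how the result is proved in \cite{Tak2}. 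You correctly flag that the existence of the extension is a domain question for $\Delta_\mu^{1/2}$ rather than a pure boundary-value estimate. Note also that your tracial consistency check ($u_{-i/2}=\rho^{1/2}$ with $\rho=d\vartheta/d\psi$) silently uses $u_t=(D\vartheta:D\psi)_t$ rather than the $(D\psi:D\vartheta)_t$ printed in the statement; the printed order appears to be a transcription slip, since $(D\psi:D\vartheta)_{-i/2}$ is in general unbounded under hypothesis (1), and your check confirms which direction is the correct one.

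There is, however, one step that fails as written. The implication $(2)\Rightarrow(1)$ is made to rest on the ``elementary operator inequality $y^*xy\le\|y\|^2x$'', which is false: take $x=e_{11}$ and $y=e_{12}+e_{21}$ in $M_2(\Cn)$, so that $y^*xy=e_{22}\not\le e_{11}=\|y\|^2x$. Nor can it be repaired at the level of the functional, since $\psi(y^*xy)\le\|y\|^2\psi(x)$ is a trace-like property that a general weight does not enjoy. The correct route --- which the balanced-weight machinery already hands you --- is to run the estimate in the GNS space of $\mu$: the boundary value $u_{-i/2}$ acts by (right) multiplication in the standard form as a bounded operator carrying $\eta_\psi(x)$ to $\eta_\vartheta(x)$ for $x\in n_\psi$, whence $\vartheta(x^*x)=\|\eta_\vartheta(x)\|^2\le\|u_{-i/2}\|^2\|\eta_\psi(x)\|^2=\|u_{-i/2}\|^2\psi(x^*x)$. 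This already gives (1) on all of $\qM_+$: if $\psi(x)=\infty$ there is nothing to prove, and otherwise $x^{1/2}\in n_\psi$ and the preceding inequality applies to $x=(x^{1/2})^*x^{1/2}$. With that single repair, your outline is a faithful reconstruction of the standard proof.
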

\begin{remark}
We emphasize that a domination of one weight by another is a stronger property than ``absolute continuity'' described in Theorem \ref{2.1}, but the domination condition is in the same vein as the condition of absolute continuity. Also notice from part (2) of the above theorem, that $|u_{-i/2}^*|^2$ in a very real sense fulfills the role of the ``density'' of $\psi$ with respect to $\vartheta$. 
\end{remark}
One may ask whether there is a relation, based on the Connes characterization of unitary Radon-Nikodym cocycles, between cocycle derivatives and the relative modular operator. More precisely, see \cite{ara1}, \cite{Araki1}, let $\phi$, $\vartheta$ be normal semifinite weights on $\qM$, and $\phi$ be faithful. Then
\begin{equation}
\label{2.4}
u_t \equiv \left({D \vartheta}\colon {D\phi}\right)_t = \Delta^{it}_{\vartheta, \phi} \Delta_{\phi}^{-it}.
\end{equation}

In particular, if $\qM$ is semifinite von Neumann algebra, $\psi$ and $\vartheta$ faithful semifinite normal weights, $\tau$ a faithful, normal semifinite trace on $\qM$, then one has (see \cite{connes}, p.470) that 
there exist positive operators affiliated with $\qM$
such that $\psi(x) = \tau(\varrho_{\psi} x)$, $\vartheta(x) = \tau(\varrho_{\vartheta}x)$ for each $x \in \qM$, and
\begin{equation}
\left({D\vartheta}\colon {D\psi}\right)_t = \varrho_{\vartheta}^{it}\varrho_{\psi}^{-it}.
\end{equation}
Hence, on applying this equality to the abelian von Neuma algebra $L^{\infty}$ (cf. the discussion at the end of the previous section), one has 
\begin{equation}
\frac{d}{dt}\left(D\varphi_{\mu} \colon D\varphi_{\nu}\right)|_{t = 0} = i \log f_{\mu} - i\log f_{\nu},
\end{equation}
where $\mu = f_{\mu}d\lambda$, $\nu = f_{\nu} d \lambda$, and $f_{\mu}>0$, $f_{\nu}>0$.  Thus
\begin{equation}
-i \left\langle \frac{d}{dt}\left.\left(D\varphi_{\mu} \colon D\varphi_{\nu}\right)\right\bracevert_{t=0}\right\rangle_{\mu} = \int f_{\mu} \log\frac{f_{\mu}}{f_{\nu}} d\lambda = \int\left(f_{\mu}\log f_{\mu} - f_{\mu}\log f_{\nu}\right)d \lambda,
\end{equation}
which is in perfect agreement with the definition of the relative entropy, cf. formula \ref{1.3}.

\vskip 1cm

We remind the reader that the proper basic structure for a description of large quantum systems, is a von Neumann algebra of type III. In other words, one is forced to deal with a von Neumann algebra which is not equipped with a nontrivial trace. \textit{Consequently, to be able to study entropy, access to the type of functional calculus required for an effective description of uniform distribution would be a powerful tool, which can be accessed by passing to a larger super-algebra, i.e. to the crossed-product $\cM$.} It is in this larger super-algebra that we have access to the functional calculus for $\tau$-measurable operators. If $\qM$ together with a canonical faithful normal semifinite weight $\omega$ is given on a Hilbert space $\cH$, then $\cM$ is the von Neumann algebra on the Hilbert space $L^2(\bR, \cH)$ generated by the following operators:
\begin{equation}
(\pi(x)\xi)(t) = \sigma_{-t}(x) \xi(t), \\ \xi \in L^2(\bR, \cH), t \in \bR, x \in \qM,
\end{equation}
\begin{equation}
(\lambda(s)\xi)(t) = x(t - s), \\ \xi \in L^2(\bR, \cH), t \in \bR, x \in \qM,
\end{equation}
where $\sigma_t=\sigma^\omega_t$ stands for the modular automorphism.
\begin{remark}
\begin{enumerate}
\item $\qM$ can be identified with its image $\pi(\qM)$ in $\cM$.
\item If $\qM$ is type III then $\cM$ is a semifinite. Thus, on $\cM$ there is a semifinite normal faithful trace!
\end{enumerate}
\end{remark}
We wish to close this section with a deep result of Haagerup, see \cite{haare} Theorem 4.7 or/and \cite{terp} pp. 26-27.
Let $\psi$, $\vartheta$ be normal, faithful semifinite weights on $\qM$. $\tilde{\psi}$ and $ \tilde{\vartheta}$ stand for the corresponding dual weights on $\cM$. Then, for any $t \in \bR$
\begin{equation}
\label{2.10}
\left({D\tilde{\psi}}\colon {D\tilde{\vartheta}}\right)_t = \left({D\psi}\colon {D\vartheta}\right)_t.
\end{equation}

\section{The von Neumann entropy and Dirac's formalism.}
In Dirac's formalism, a (small) quantum system is described by an infinite dimensional Hilbert space $\cH$ and the von Neumann algebra $B(\cH)$. A normal state $\psi$ on $B(\cH)$ has the form $\psi(a) = \Tr \varrho_{\psi} a$ where $\varrho_{\psi}$ is a positive trace class operator, with trace equal to $1$, i.e. $Tr \varrho_{\psi} = 1$. Here the set of states $\mathfrak{S}$ is given by $\mathfrak{S} = \{ \varrho \in B(\cH); \varrho^* = \varrho, \varrho \geq 0, \Tr \varrho = 1\}$.  Applying the non-commutative Radon-Nikodym theorem, see Theorem 1, pp. 469-470 in \cite{connes}, one has
\begin{equation}
\varrho_{\psi}^{it} = (D\psi : D\Tr)_t
\end{equation}
We remind that von Neumann entropy $S(\varrho_{\psi})$ was defined as
\begin{equation}
S(\varrho_{\psi}) = \Tr (\varrho_{\psi} \log \varrho_{\psi}).
\end{equation} 

This definition can be rewritten in Radon-Nikodym terms in the following way
\begin{equation}
S(\varrho_{\psi}) = -i \Tr \left(\varrho_{\psi}\left.\frac{d}{dt} (D\psi : D\Tr )\right\bracevert_{t= 0}\right)  \equiv -i \psi \left(\left.\frac{d}{dt}(D\psi : D\Tr)\right\bracevert_{t=0} \right),
\end{equation}
and, for $\psi(\cdot) = \Tr (\varrho_{\psi} \cdot)$, $\varphi(\cdot) = \Tr(\varrho_{\varphi} \cdot)$
\begin{equation}
\label{3.4}
S(\psi|\varphi) = \Tr\left(\varrho_{\psi} \log \varrho_{\psi} - \varrho_{\psi} \log \varrho_{\varphi}\right) = -i \psi\left(\left.\frac{d}{dt}(D\psi \colon D\varphi)\right\bracevert_{t=0} \right),
\end{equation}
where we assumed that the states are faithful.
\begin{remark}
As was pointed out at the end of \cite[Section 6]{[2]}, within Dirac's formalism the Orlicz space scheme for selecting ``good'' states with well defined entropy, agrees with the standard approach to elementary quantum mechanics. Specifically in this setting the space $L^1\cap L(\log(L+1))(B(\cH))$ is precisely the trace class operators $L^1(B(\cH))$. In fact for $B(\cH)$, all noncommutative measurable operators are already bounded; for details see cf \cite{[2]}, and \cite{Maj}. This behaviour is not unexpected as on the one hand Dirac's formalism is designed for small systems, and on the other hand, restricting to $B(\cH)$, noncommutative integration theory is oversimplified. The entropy for large systems will be examined in the next section.
\end{remark}

\section{General quantum case}\label{sec4}

Let us consider a  general quantum system and let $\mathfrak{M}$ be a von Neumann algebra associated with the system. In general, for large systems, $\mathfrak{M}$ is a type III von Neumann algebra. Let $\omega$ be a normal semifinite faithful weight on $\mathfrak{M}$.
The weight $\omega$ will play the role of a noncommutative probability reference measure. We denote by $\cM$ the cross product of $\mathfrak{M}$ associated with the modular morphism $\sigma_{\omega}$ produced by $\omega$, cf. Section 2. By $\tilde{\omega}$ we denote the dual (and hence normal semifinite faithful) weight on $\cM$, and $\tau$ stands for the canonical trace on $\cM$.
We remind that the modular automorphism $\tilde{\sigma}$ produced by the dual weight $\tilde{\omega}$ has the form $\tilde{\sigma}_t(\cdot) = \lambda(t) \cdot \lambda(t)^*$ - for details see \cite{Tak}, \cite{terp} and \cite{LM}. By Stone's theorem one has $\lambda(t) = h^{it}$. We note that $\log h$ can be identified with $-i\frac{d}{dt}(D\tilde{\omega}:{D\tau})|_{t=0}$ where $\tau$ is the canonical trace on $\cM$. A definition of relative entropy based on this derivative is therefore a compelling proposition. We pause to review Araki's definition of relative entropy before pursuing this matter further.

Let $\qM$ be a $\sigma$-finite von Neumann algebra in standard form described above, and let $\psi$ and $\phi$ be two faithful normal states with unit vector representatives $\Psi, \Phi \in \mathcal{H}$. The basic theory of Tomita-Takesaki theory easily extends to show that the densely defined anti-linear operator $S_{\phi,\psi}(a\Psi)=a^*\Phi$ is in fact closable. The operator $\Delta_{\phi,\psi}$ is then defined to be the modulus of the closure of $S_{\phi,\psi}$. In the same way that the ``standard'' modular operator may be used to generate the modular automorphism group of a given state, this operator in a very real sense encodes the manner in which the dynamics determined by the modular automorphism group of one state, differs from other. Using this fact, Araki then defined the relative entropy of $\psi$ and $\phi$ to be $-\langle \Psi,\log(\Delta_{\phi,\psi}) \Psi\rangle$. We refer the interested reader to \cite{ara1, ara2} and the references therein for a survey of the basic properties of this entropy. Despite the success of Araki's approach, we prefer a definition based on the derivative of the Connes cocycle, since on the one hand it is more overtly based on modular dynamics, and on the other it more easily allows for the incorporation of crossed product techniques in the study of this entropy -- as we shall subsequently see. 

A version of such a theorem appears in the book of Ohya and Petz \cite[Theorem 5.7]{OP}, where the authors showed that we infact have that  $S(\phi|\psi)=\lim_{t\to 0}\frac{-i}{t}\vartheta[(D{\psi} \colon D{\phi})_t-\I]$ but under the assumption that $-\langle \Psi,\log(\Delta_{\phi,\psi}) \Psi\rangle$ is finite. Some of the crucial facts which help to establish this link, is the fact that the Connes cocycle derivative $(D{\psi} \colon D{\phi})_t$ may be described in terms of $\Delta_{\phi,\psi}$ and $\Delta_\phi$ (see Appendix B of \cite{Araki1} for details). Another is that any normal state $\vartheta$ on a $\sigma$-finite von Neumann algebra $\qM$ in standard form, must have a vector representative \cite[Theorem 2.5.31]{BR}. We will show that if the above limit is modified in a manner which modulates the spectral behaviour of $\Delta_{\psi,\vartheta}$ near 0, one gets general equality of the two approaches. 

In anticipation of Theorem \ref{ar-ent}, we therefore propose the following definition:

\begin{definition}\label{gqdef}
Let $\psi$, $\vartheta$ be faithful normal states on $\qM$. We define the relative entropy $S(\vartheta|\psi)$ to be $S(\vartheta|\psi)=\lim_{\epsilon\searrow 0}\lim_{t\to 0}\frac{-i}{t}\psi[((D{\psi} \colon D{\vartheta})_t-\I)\chi_{[\epsilon, \infty)}(\Delta_{\vartheta,\psi})]$.
\end{definition}

We proceed with proving the promised theorem. Our proof is quite different from the one used by Ohya and Petz. In particular we are able to apply the Dominated Convergence theorem directly to the functions $|\frac{1}{t}(\lambda^{it}-1)|$, rather than to $|\log\lambda - \frac{1}{it}(\lambda^{it}-1)|$.  

\begin{theorem}\label{ar-ent} 
Let $\qM$ be a $\sigma$-finite von Neumann algebra in the standard form described above, and let $\psi$ and $\phi$ be two faithful normal states with unit vector representatives $\Psi, \Phi \in \mathcal{H}$. Then, allowing for $\infty$ as a possible value, the limit $\lim_{\epsilon\searrow 0}\lim_{t\to 0}\frac{-i}{t}\psi[((D{\psi} \colon D{\phi})_t-\I)\chi_{[\epsilon, \infty)}(\Delta_{\phi,\psi})]$ always exists and equals $-\langle\log(\Delta_{\phi,\psi})\Psi,\Psi\rangle$. In the case where this quantity is finite, we even have that 
$\lim_{t\to 0}\frac{-i}{t}\psi[(D{\psi} \colon D{\phi})_t-\I] = -\langle\log(\Delta_{\phi,\psi})\Psi,\Psi\rangle$. 
\end{theorem}

\begin{proof}
Recall that Araki's definition of entropy is $$S(\phi|\psi) = -\langle\log(\Delta_{\phi,\psi}) \Psi, \Psi\rangle$$where the latter term is understood to be $$-\int_0^\infty\log(\lambda)\,d\langle e_\lambda \Psi, \Psi\rangle$$(here $\lambda\to e_\lambda$ is the spectral resolution of $\Delta_{\phi,\psi}$). As Araki points out, the value of this integral is either real (in the case where $\log$ is integrable), or $\infty$ otherwise. To see this, notice that if $\log$ is not integrable, the fact that $0\leq \log(\lambda)\leq \lambda$ on $[1,\infty)$ ensures that $$0\leq \int_1^\infty\log(\lambda)\,d\langle e_\lambda \Psi, \Psi\rangle \leq \int_0^\infty\lambda\,d\langle e_\lambda \Psi, \Psi\rangle=
\langle \Delta_{\phi,\psi} \Psi, \Psi\rangle<\infty,$$and hence that we nevertheless still have that $\log$ is integrable on $[1,\infty)$. Therefore, any nonintegrability of $\log$ must be derived from the fact that $-\int_0^1\log(\lambda)\,d\langle e_\lambda \Psi, \Psi\rangle=\infty$.

First, suppose that $\log$ is integrable. For any $t$ and any $\lambda>0$, we have that $$\left|\frac{1}{t}(\lambda^{-it}-1)\right| \leq \left|\frac{1}{t}(\lambda^{-it/2}-1)(\lambda^{-it/2}+1)\right| \leq \left|\frac{2}{t}(\lambda^{-it/2}-1)\right|.$$ Carrying on inductively leads to the
conclusion that $|\frac{1}{t}(\lambda^{-it}-1)| \leq |\frac{2^k}{t}(\lambda^{-it/2^k}-1)|$ for any $k\in \mathbb{N}$. If now we let
$k\to \infty$, we obtain the inequality $|\frac{1}{t}(\lambda^{it}-1)| \leq |\log(\lambda)|$, which holds for any $t>0$ and any
$\lambda>0$. Hence we may apply the dominated convergence theorem to see that for any sequence $(t_n)$ converging to 0, we have that
\begin{eqnarray*}
\lim_{n\to \infty}\frac{-i}{t_n}\langle(\Delta_{\phi,\psi}^{-it_n}-\I)\Psi, \Psi\rangle &=& \lim_{n\to \infty}\frac{-i}{t_n}\int_0^\infty(\lambda^{-it_n}-1)\,d\langle e_\lambda \Psi, \Psi\rangle\\
&=& -\int_0^\infty\log(\lambda)\,d\langle e_\lambda \Psi, \Psi\rangle.
\end{eqnarray*}
This fact is enough to enable us to conclude that $$\lim_{t\to 0}\frac{-i}{t}\langle (\Delta_{\phi,\psi}^{-it}-\I)\Psi, \Psi\rangle = -\int_0^\infty\log(\lambda)\,d\langle e_\lambda \Psi, \Psi\rangle.$$

We now consider the general case and leave it as an exercise to show that we will for any $\epsilon>0$ have that 
$$\frac{-i}{t}\psi[((D\psi:D\phi)_t-{\mathds{1}})\chi_{[\epsilon,\infty)}(\Delta_{\phi,\psi})] = \frac{-i}{t}\int_\epsilon^\infty(\lambda^{-it}-1)\,d\langle e_\lambda\Psi,\Psi\rangle.$$Since 
 $\lambda\mapsto\log(\lambda)$ is bounded and continuous on $[\epsilon,1]$, an entirely analogous argument shows that we will for any sequence $(t_n)$ converging to 0 have that 
$$\lim_{n\to\infty}\frac{-i}{t_n}\psi[((D\psi:D\phi)_{t_n}-{\mathds{1}})\chi_{[\epsilon,\infty)}(\Delta_{\phi,\psi})] = -\int_\epsilon^\infty\log(\lambda)\,d\langle e_\lambda\Psi,\Psi\rangle.$$Since 
$\int_1^\infty\log(\lambda)\,d\langle e_\lambda\Psi,\Psi\rangle$ is finite and $\log$ decreasing without bound on $[0,1]$, it is palpably clear that even for the case where the the full integral 
diverges, we have that $\lim_{\epsilon\searrow}\int_\epsilon^\infty\log(\lambda)\,d\langle e_\lambda\Psi,\Psi\rangle = -\int_0^\infty\log(\lambda)\,d\langle e_\lambda\Psi,\Psi\rangle$ which then proves 
the theorem.
\end{proof}

\begin{remark}\label{rem-4.3}
If $\qM$ is commutative, then $\qM = L^{\infty}(X, \mu)$, in which case $\psi$ and $\phi$ correspond to positive measures on $X$. In particular (cf Theorem \ref{connes}) there exists a Radon-Nikodym derivative $h=\frac{d\psi}{d\phi}$, and $h^{it} = (D\psi : D\phi)_t$. Therefore, the definition of classical relative entropy is also stemming from Definition \ref{gqdef}. Finally, the definition of relative entropy for Dirac's formalism also follows from Definition \ref{gqdef} (cf formula (\ref{3.4})).
\end{remark}

To say more, we are going to invoke some results from the theory of $L^p$-spaces associated with von Neumann algebras. We note, cf. \cite{terp}, Theorem 36, that
\begin{equation}
\left( \lambda(\qM), L^2(\qM), J, L^2(\qM)_+\right)
\end{equation}
is the standard form of $\qM$, where the right action $\lambda(\cdot)$ is defined as $\lambda(a)a = a, a \in L^2(\qM)$ and $J$ denotes the conjugate isometric involution $a \mapsto a^*$ of $L^2(\qM)$

Next, we note, cf. \cite{terp} Proposition 4,  that there is a bijection $\phi \mapsto h_{\phi}$ of the set of all normal semifinite weights on $\qM$ onto the set of all positive selfadjoint operators $h$ affiliated with $\cM$, and satisfying $\theta_s(h) = e^{-s}h$ for any $s \in \Rn$. As we wish to deal with $\tau$-measurable operators we must restrict ourselves to normal functionals on $\qM$. Then, the mapping $\phi \mapsto h_{\phi}$
is an isometry of $\qM_*$ onto $L^1(\qM)$. Consequently, fixing $\phi \in \qM_{*,+}$, one gets $h_{\phi} \in L^1(\qM)_+$. In particular, $h^{\frac{1}{2}}_{\phi} \in L^2(\qM)_+$, and
\begin{equation}
\phi(x) = tr(h^{\frac{1}{2}}_{\phi} x h^{\frac{1}{2}}_{\phi}) \equiv \langle h^{\frac{1}{2}}_{\phi},x h^{\frac{1}{2}}_{\phi}\rangle_{L^2(\qM)},
\end{equation}
where $tr$ stands for a linear functional (having the trace property) on $L^1(\qM)$, see Definition II.13 and Proposition II.21 in \cite{terp}.
In other words, $h^{\frac{1}{2}}_{\phi}$ is a vector in the natural cone $L^2(\qM)_+$, and this vector represents the state $\phi$.

Using the above framework, the proposed definition of entropy may be written as the claim that $S(\psi|\phi)=  \lim_{t\to 0} \frac{-i}{t}tr(h_\psi^{1/2}[(D{\psi} \colon D{\phi})_t-\I]h_\psi^{1/2})$ whenever the limits exists, with  $S(\psi|\phi)= \infty$ otherwise. The next objective in this section, is to show that this definition can very concretely be reformulated in a manner which is a faithful noncommutative analogue of the classical formula presented in Equation \ref{1.5}. However for this we will need some additional technology which we now review.

The first factor that suggests that such a formula may well be within reach is the fact that in the above framework we have that
$$(D{\vartheta} \colon D{\psi})_t = h_\vartheta^{it}h_\psi^{-it}$$where $h_\vartheta=\frac{D\widetilde{\vartheta}}{D\tau}$ and 
$h_\psi=\frac{D\widetilde{\psi}}{D\tau}$. To see that the above claim is true, we may use Haagerup's result and the cocycle chain rule to see that 
$$\I = (D{\widetilde{\psi}} \colon D\widetilde{\psi})_t = (D{\widetilde{\psi}} \colon D\tau)_t (D\tau \colon D\widetilde{\psi})_t.$$Equivalently
$$(D\tau \colon D\widetilde{\psi})_t=(D{\widetilde{\psi}} \colon D\tau)_t^{-1}.$$But from section 2 we know that $(D{\widetilde{\psi}} \colon D\tau)_t=h_\psi^{it}$. Hence $(D\tau \colon D\widetilde{\psi})_t = h_\psi^{-it}$. Since also $(D{\widetilde{\vartheta}} \colon D\tau)_t=h_\vartheta^{it}$, we may once again use Haagerup's result and the chain rule to see that
\begin{eqnarray}\label{alt-cocycle}
(D{\vartheta} \colon D{\psi})_t &=& (D{\widetilde{\vartheta}} \colon D\widetilde{\psi})_t\\
&=& (D{\widetilde{\vartheta}} \colon D\tau)_t (D\tau \colon D\widetilde{\psi})_t\nonumber\\
&=& h_\vartheta^{it}h_\psi^{-it}\nonumber
\end{eqnarray}

Another major factor to take into account is that $tr$ is only defined on $L^1(\qM)$. Thus, to proceed with our objective of developing a noncommutative version of formula (\ref{1.5}) we must show that in some sense $h_{\vartheta} \log h_{\vartheta} - h_{\vartheta} \log h_{\phi}$ is in $L^1$. As we shall see below, this can indeed be achieved in a limiting sense. Following Terp's arguments, see \cite{terp} Lemma II.19, we consider the function
\begin{equation} 
\label{terp'sfunction}
S^0 \ni \alpha \mapsto h_{\vartheta}^{\alpha} h_{\phi}^{1-\alpha} \in L^1(\qM),
\end{equation}
where obviously $h_{\vartheta}, h_{\phi} \in L^1(\qM)$, and $S$ is the closed complex strip 
$\{ \alpha \in \Cn; 0\leq \Re(\alpha) \leq 1 \}$ and $S^0$ stands for the corresponding open strip. Terp's Lemma II.19 easily adapts to show that the function (\ref{terp'sfunction}) is analytic on $S^0$. Taking the derivative, in the Banach space language, one gets that
\begin{equation}
\label{4.6}
\alpha \mapsto  h_{\vartheta}^{\alpha} \cdot \log h_{\vartheta} \cdot h_{\phi}^{1 - \alpha} - h_{\vartheta}^{\alpha} \cdot \log h_{\phi} \cdot h_{\phi}^{1 - \alpha} \in L^1(\qM)
\end{equation}
inside $S^0$. More importantly, the analyticity ensures that this derivative varies continuously with respect to $\alpha$ in 
$L^1$-norm. A fact which underlies the above very regular behaviour on this strip, is that for any $0<s$, $x^s\log(x)$ is very well behaved continuous function which is 0 at 0, and for which $x^s\leq x^s\log(x) \leq x^{s+1}$ whenever $x\geq e$. This fact can be used to show that for any positive $\tau$-measurable operator $g$, $g^s\log(g)$ will again be $\tau$-measurable. 

Using the above formula and letting $\alpha\to 1$, leads us to the promised noncommutative analogue of formula (\ref{1.5}). To understand how this is achieved, assume that $\alpha=s+it$ where $0< s < 1$. Then the fact that $$h_{\vartheta}^{\alpha} h_{\phi}^{1-\alpha}=h_{\vartheta}^{s}[h_{\vartheta}^{it} h_{\phi}^{-it}] h_{\phi}^{1-s}$$leads to the conclusion that 

\begin{eqnarray*}
-i\frac{d}{dt}(h_{\vartheta}^{s}(D{\vartheta}:D{\phi})_t h_{\phi}^{1-s})&=&-i\frac{d}{dt}(h_{\vartheta}^{s}[h_{\vartheta}^{it} h_{\phi}^{-it}] h_{\phi}^{1-s})\\
&=&\frac{d}{d\alpha} h_{\vartheta}^{\alpha} h_{\phi}^{1-\alpha}\\
&=& h_{\vartheta}^{\alpha} \cdot \log h_{\vartheta} \cdot h_{\phi}^{1 - \alpha} - h_{\vartheta}^{\alpha} \cdot \log h_{\phi} \cdot h_{\phi}^{1 - \alpha}.
\end{eqnarray*}
To see this, notice that in computing a limit of the form $\lim_{\Delta\alpha\to 0} \frac{f(\alpha+\Delta\alpha)-f(\alpha)}{\alpha}$, we may as well assume that $\Delta\alpha=i\Delta t$. In particular when computing the derivative anywhere along the line segment $0< s < 1$, $t=0$, we always have that
$$-i\frac{d}{dt}\left.(h_{\vartheta}^{s}(D{\vartheta}:D{\phi})_t h_{\phi}^{1-s})\right\bracevert_{t=0} = h_{\vartheta}^{s} \cdot \log h_{\vartheta} \cdot  h_{\phi}^{1 - s} - h_{\vartheta}^{s} \cdot \log h_{\phi} \cdot h_{\phi}^{1 - s}\in L^1(\qM).$$
With the groundwork having been done, we are now ready to present the promised result. 

The concept that is crucial in guaranteeing the validity of the theorem, is the following ordering defined by Takesaki and Connes. (See \cite[Definition 4.1]{ConTak}.)

\begin{definition}
For two normal weights $\vartheta$ and $\phi$ on a von Neumann algebra $\qM$, and some positive $\delta$, we say that $\vartheta\leq \phi(\delta)$ if the function $t\to (D\vartheta:D\phi)_t=u_t$ extends to an $\qM$-valued map $z\to u_z$ which is point to weak*-continuous and bounded
on the closed strip $\{z\in\mathbb{C}: -\delta\leq\Im(z)\leq 0\}$, and analytic on the open strip 
$\{z\in\mathbb{C}: -\delta<\Im(z)< 0\}$. 
\end{definition}

In the above ordering the case $\delta=\frac{1}{2}$ corresponds exactly to Theorem \ref{2.2}.

\begin{theorem}\label{thm4.3} 
Let $\qM$ be a $\sigma$-finite von Neumann algebra in the standard form and let $\vartheta$ and $\phi$ be two faithful normal states with unit vector representatives $h_\vartheta^{1/2}, h_\phi^{1/2} \in L^2(\qM)$. If $\phi\leq \vartheta(\delta)$, and $S(\phi|\vartheta)$ is finite, then limit
$$\lim_{s\nearrow 1} tr(h_{\vartheta}^{s}\cdot  \log h_{\vartheta} \cdot h_{\phi}^{1 - s} - h_{\vartheta}^{s} \cdot  \log h_{\phi} \cdot  h_{\phi}^{1 - s})$$exists and equals $S(\phi|\vartheta)$.
\end{theorem}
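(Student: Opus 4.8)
The plan is to reduce the entire statement to a single scalar spectral integral and then invoke monotone convergence. Throughout I write $\Psi=h_\vartheta^{1/2}\in L^2(\qM)$ for the vector representative of $\vartheta$, let $\Delta_{\phi,\vartheta}$ be the relative modular operator with spectral resolution $\lambda\mapsto e_\lambda$, and let $m$ be the probability measure $dm(\lambda)=d\langle\Psi,e_\lambda\Psi\rangle$ (so $m((0,\infty))=\|\Psi\|^2=\vartheta(\I)=1$). By Theorem \ref{ar-ent} the relative entropy of Definition \ref{gqdef} agrees with Araki's, so that $S(\vartheta|\phi)=-\int_0^\infty\log\lambda\,dm(\lambda)\in(-\infty,+\infty]$, the only possible source of divergence being the contribution from $(0,1)$.

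First I would observe that the quantity inside the limit in the statement is exactly the real-axis derivative of the scalar analytic function $g(\alpha)=tr(h_\vartheta^\alpha h_\phi^{1-\alpha})$. Indeed, since $tr$ is a bounded linear functional on $L^1(\qM)$ and $\alpha\mapsto h_\vartheta^\alpha h_\phi^{1-\alpha}$ is $L^1$-analytic with derivative \eqref{4.6}, one has $g'(s)=tr(h_\vartheta^s\log h_\vartheta\cdot h_\phi^{1-s}-h_\vartheta^s\log h_\phi\cdot h_\phi^{1-s})$ for $s\in(0,1)$. Hence the theorem amounts to the assertion that $\lim_{s\nearrow1}g'(s)$ is finite if and only if $S(\vartheta|\phi)<\infty$, with equality in that case.

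Next I would compute $g$ spectrally. In the $L^2(\qM)$-picture the relative modular operator acts by $\Delta_{\phi,\vartheta}^{it}\xi=h_\phi^{it}\xi h_\vartheta^{-it}$ (consistent with \eqref{alt-cocycle}), so analytic continuation gives $\Delta_{\phi,\vartheta}^{r}\Psi=h_\phi^{r}h_\vartheta^{1/2-r}$, and the trace property yields $\langle\Psi,\Delta_{\phi,\vartheta}^{r}\Psi\rangle=tr(h_\vartheta^{1-r}h_\phi^{r})=g(1-r)$. Therefore
\begin{equation*}
g(s)=\int_0^\infty\lambda^{1-s}\,dm(\lambda),\qquad s\in[0,1],
\end{equation*}
which is finite for every such $s$ because $\int dm=\int\lambda\,dm=1$ forces $\int\lambda^{1-s}\,dm\le1$. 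Differentiating under the integral (legitimate on compact subintervals of $(0,1)$ by these finite moments) gives $g'(s)=-\int_0^\infty(\log\lambda)\lambda^{1-s}\,dm(\lambda)$. It remains to let $s\nearrow1$. Splitting at $\lambda=1$: on $[1,\infty)$ the integrand $(\log\lambda)\lambda^{1-s}$ decreases to $\log\lambda\ge0$ and is dominated by $\lambda\in L^1(m)$, so that piece converges to the finite value $\int_1^\infty\log\lambda\,dm$; on $(0,1)$ the integrand $(-\log\lambda)\lambda^{1-s}$ increases to $-\log\lambda\ge0$, so by monotone convergence it tends to $\int_0^1(-\log\lambda)\,dm\in[0,+\infty]$. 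Consequently $\lim_{s\nearrow1}g'(s)=-\int_0^\infty\log\lambda\,dm=S(\vartheta|\phi)$ in $(-\infty,+\infty]$, the limit being finite precisely when the small-$\lambda$ term $\int_0^1(-\log\lambda)\,dm$ is finite, i.e. precisely when $S(\vartheta|\phi)<\infty$. This is the required equivalence and equality.

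The hard part will be the rigorous justification of the spectral identity $g(s)=\int\lambda^{1-s}\,dm$ and of the differentiation under the integral, since these involve the unbounded objects $\log h_\vartheta$, $\log h_\phi$ and the negative powers $h_\vartheta^{1/2-r}$, whose products only live in the $\tau$-measurable category rather than in $\qM$. It is here that the hypothesis $\phi\le\vartheta(\delta)$ earns its keep: by the Connes--Takesaki ordering the cocycle $(D\phi:D\vartheta)_z$, and hence $\Delta_{\phi,\vartheta}^{iz}\Psi$, extends boundedly and analytically to the strip $-\delta\le\Im z\le0$, which both furnishes the finite moments $\int\lambda^{2\beta}\,dm<\infty$ for $0\le\beta\le\delta$ and secures the continuity of $s\mapsto g'(s)$ up to the relevant boundary. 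These bounds guarantee that the difference quotients of $\alpha\mapsto h_\vartheta^\alpha h_\phi^{1-\alpha}$ converge in $L^1$-norm uniformly near the real segment, legitimising every interchange above and tying the interior derivative $\lim_{s\nearrow1}g'(s)$ to the boundary derivative $-i\frac{d}{dt}g(1+it)|_{t=0}$ through which Definition \ref{gqdef} expresses $S(\vartheta|\phi)$.
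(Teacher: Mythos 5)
Your argument is essentially correct, but it is a genuinely different proof from the one in the paper. The paper works directly with Definition \ref{gqdef}: it fixes $\epsilon$, chooses $t_\epsilon$ so that the $t$-difference quotient approximates both $S(\vartheta|\phi)$ and (in $L^1$-norm) the derivative \eqref{4.6}, and then uses the hypothesis $\phi\leq\vartheta(\delta)$ to write $h_\phi^{1-s}h_\vartheta^{s-1/2}=[h_\phi^{1-s}h_\vartheta^{-(1-s)}]h_\vartheta^{1/2}$ and let the bounded analytic extension $[h_\phi^{1-s}h_\vartheta^{-(1-s)}]$ converge weak* to $\I$ as $s\nearrow 1$; the two limits are then interchanged by a $3\epsilon$ argument. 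You instead collapse everything onto the scalar function $g(s)=tr(h_\vartheta^s h_\phi^{1-s})=\langle\Psi,\Delta_{\phi,\vartheta}^{1-s}\Psi\rangle=\int_0^\infty\lambda^{1-s}\,dm(\lambda)$ and identify both sides of the theorem with $\lim_{s\nearrow 1}g'(s)$, computed by splitting at $\lambda=1$ and applying dominated convergence on $[1,\infty)$ (where $\int_1^\infty\lambda\,dm\leq 1$ dominates) and monotone convergence on $(0,1)$. This buys two things the paper's proof does not deliver: the limit is shown to always exist in $(-\infty,+\infty]$ and to equal $S(\vartheta|\phi)$ unconditionally, and --- contrary to what your last paragraph suggests --- the hypothesis $\phi\leq\vartheta(\delta)$ is not actually needed anywhere in your route, since the only moments you use, $\int dm=\|\Psi\|^2=1$ and $\int\lambda\,dm=\|\Delta_{\phi,\vartheta}^{1/2}\Psi\|^2=\|\Phi\|^2=1$, are automatic for a pair of states, and these already justify differentiation under the integral on compact subintervals of $(0,1)$. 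The price is a heavier reliance on imported machinery: you need Theorem \ref{ar-ent} as a black box, the $L^1$-analyticity statement \eqref{4.6} to identify the trace expression with $g'(s)$, and above all the Haagerup-picture formula $\Delta_{\phi,\vartheta}^{z}\xi=h_\phi^{z}\xi h_\vartheta^{-z}$, which is standard (Terp; Araki--Masuda) but should be cited precisely since the whole spectral identity $g(1-r)=\int\lambda^r\,dm$ rests on it. Two small corrections: that identity is only needed, and only automatically valid, for $r=1-s\in[0,1/2]$ (i.e. $s\in[1/2,1]$), not for all $s\in[0,1]$ as you state; and your closing appeal to the ordering to obtain moments $\int\lambda^{2\beta}\,dm$ is superfluous for the reason just given.
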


\begin{proof} First suppose that $S(\vartheta|\phi)$ is finite and let $\epsilon>0$ be given. This means that $\lim_{t\to 0}\frac{-i}{t}\vartheta[(D{\vartheta} \colon D{\psi})_t-\I]= \lim_{t\to 0}\frac{-i}{t}tr(h_{\vartheta}^{1/2} [h_{\vartheta}^{it} h_{\phi}^{-it}-\I]h_{\vartheta}^{1/2})$ exists. Next let $s$ be given with $\frac{1}{2}<s<1$. So for $t_\epsilon>0$ small enough, we will have that
\begin{itemize}
\item $\frac{-i}{t_\epsilon}tr(h_{\vartheta}^{1/2} [h_{\vartheta}^{it_\epsilon} h_{\phi}^{-it_\epsilon}-\I]h_{\vartheta}^{1/2})$ is within $\epsilon$ of $S(\vartheta|\phi)$, 
\item and $\frac{-i}{t_\epsilon}h_{\vartheta}^{s} [h_{\vartheta}^{it_\epsilon} h_{\phi}^{-it_\epsilon}-\I]h_\phi^{1-s}$ is within $\epsilon$ of $-i\frac{d}{dt}\left.(h_{\vartheta}^{s}(D{\vartheta}:D{\phi})_t h_{\phi}^{1-s})\right\bracevert_{t=0}=h_{\vartheta}^s \cdot \log h_{\vartheta} \cdot h_{\phi}^{1 - s} - h_{\vartheta}^{s} \cdot \log h_{\phi} \cdot h_{\phi}^{1 - s}$ with respect to $L^1$-norm.
\end{itemize}
Notice that by the properties of the trace functional $tr$ we have that $tr(h_{\vartheta}^{s} [h_{\vartheta}^{it_\epsilon} h_{\phi}^{-it_\epsilon}-\I]h_\phi^{1-s})=tr(h_{\vartheta}^{1/2} [h_{\vartheta}^{it_\epsilon} h_{\phi}^{-it_\epsilon}-\I]h_\phi^{1-s}h_{\vartheta}^{s-1/2})$. 

By assumption $\phi\leq \vartheta(\delta)$. This means that $t\to(D\phi:D\vartheta)_t$ extends to an $\qM$-valued function $f(z)$ which is point to weak*-continuous on the closed strip, $\{z\in\mathbb{C}: -\delta\leq\Im(z)\leq 0\}$, and analytic on the open strip $\{z\in\mathbb{C}: -\delta<\Im(z)< 0\}$. For each $z$ the value $f(z)$ is essentially just an extension of $h_\phi^{iz}h_{\vartheta}^{-iz}$. (For details of this construction see \cite{Kos}). In view of this we will simply write $[h_\phi^{iz}h_{\vartheta}^{-iz}]$ for $f(z)$. So if we set $z=ir$ where $0\leq r\leq \delta$, we obtain that as $r\searrow 0$ we will have that $[h_\phi^{-r}h_{\vartheta}^r]\to \I$ in the weak* topology on $\qM$. 

Next notice that for $0\leq 1-s\leq\delta$, we have that $h_\phi^{1-s}h_{\vartheta}^{s-1/2}= [h_\phi^{1-s}h_{\vartheta}^{-(1-s)}]h_\vartheta^{1/2}$.
Therefore as $s\nearrow 1$ on the interval $[1-\delta, 1]$, we must have that $\frac{-i}{t_\epsilon}tr(h_{\vartheta}^{1/2} [h_{\vartheta}^{it_\epsilon} h_{\phi}^{-it_\epsilon}-\I]h_\phi^{1-s}h_{\vartheta}^{s-1/2})=\frac{-i}{t_\epsilon}tr(h_{\vartheta} [h_{\vartheta}^{it_\epsilon} h_{\phi}^{-it_\epsilon}-\I][h_\phi^{1-s}h_{\vartheta}^{-(1-s)}])$ converges to $\frac{-i}{t_\epsilon}tr(h_{\vartheta}[h_{\vartheta}^{it_\epsilon} h_{\phi}^{-it_\epsilon}-\I])=\frac{-i}{t_\epsilon}tr(h_{\vartheta}^{1/2} [h_{\vartheta}^{it_\epsilon} h_{\phi}^{-it_\epsilon}-\I]h_{\vartheta}^{1/2})$.

There must therefore exist a $\widetilde{\delta}>0$ such that for any $s$ with $1-\widetilde{\delta} <s <1$, the term $\frac{-i}{t_\epsilon}tr(h_{\vartheta}^{1/2} [h_{\vartheta}^{it_\epsilon} h_{\phi}^{-it_\epsilon}-\I]h_\phi^{1-s}h_{\vartheta}^{s-1/2})$ will be within $\epsilon$ of $\frac{-i}{t_\epsilon}tr(h_{\vartheta}^{1/2} [h_{\vartheta}^{it_\epsilon} h_{\phi}^{-it_\epsilon}-\I]h_{\vartheta}^{1/2})$. If we combine all the above observations, it follows that for any $s$ with $1-\widetilde{\delta} <s <1$, $tr(h_{\vartheta}^{s}\cdot \log h_{\vartheta} \cdot  h_{\phi}^{1 - s} - h_{\vartheta}^{s} \cdot \log h_{\phi} \cdot h_{\phi}^{1 - s})$  will be within $3\epsilon$ of $S(\vartheta|\phi)$. This proves the theorem.
\end{proof}

\section{An alternative approach to the general quantum case}\label{sec5} 

Here we propose a means for defining the entropy of a single state $\vartheta$. This definition turns out to be equivalent to von Neumann entropy in the tracial case. Some careful preparation for and justification of this definition is required. As a first step in identifying a suitable prescription for defining entropy of a single state, we take some time to see what Theorem \ref{thm4.3} looks like when the states in question commute. A crucial tool in this endeavour, is the recently developed theory of Orlicz spaces for type III algebras (see \cite{L}). A crucial construct in the development of this theory of ``type III'' Orlicz spaces, is the concept of a fundamental function. The fundamental function of a rearrangement invariant Banach function space on $(\mathbb{R},\mathcal{B}_{\mathbb{R}}, \lambda)$, say $L^\rho(\mathbb{R})$, is defined on $[0,\infty)$ by the prescription $\varphi(t) = \|\chi_E\|_\rho$, where $E$ is any measurable subset of $\mathbb{R}$ with $\lambda(E)=t$. The rearrangement invariance of the space in question, ensures the well-definedness of the corresponding fundamental function. The interested reader may find a more detailed introduction to fundamental functions on pages 65-73 of \cite{BS}. The two facts regarding fundamental functions that we need, is that for an Orlicz space $L^\Psi(\mathbb{R})$, the fundamental function is given by the prescription $t\to\frac{1}{\Psi^{-1}(1/t)}$ when the Luxemburg norm is in view, and by $t\to t(\Psi^*)^{-1}(1/t)$ when the Orlicz norm is in view. (See \cite[II.5.2, IV.8.15 \& IV.8.17]{BS}.) We will need the following lemma in our investigation. (The proof is contained in the proof of \cite[Theorem 2.2]{L}.)

\begin{lemma}\label{mainthm}
Let $a, b\, \eta \,\qM^+$ be commuting affiliated operators. Let $\Psi$ be an Orlicz function and let $\varphi_\psi$ be the fundamental function of $L^\Psi(\mathbb{R})$ equipped with the Luxemburg norm. Then $$\chi_{(1,\infty)}(a\varphi_\Psi(b)) = \chi_{(1,\infty)}(\Psi(a)b).$$
\end{lemma}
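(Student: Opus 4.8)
The plan is to collapse the operator identity into a scalar (pointwise) identity by exploiting commutativity, and then to verify that scalar identity directly from the explicit form of the fundamental function. Since $a$ and $b$ are commuting positive operators affiliated with $\qM$, the von Neumann algebra they generate is abelian, hence isomorphic to some $L^\infty(\Omega,\mu)$; under this isomorphism positive affiliated operators correspond to non-negative measurable functions, Borel functional calculus becomes composition, and the spectral projection $\chi_{(1,\infty)}(T)$ of a positive affiliated operator $T$ corresponds to the indicator of the set on which the representing function exceeds $1$. Writing $\alpha,\beta\geq 0$ for the functions representing $a,b$, the operators $a\,\varphi_\Psi(b)$ and $\Psi(a)\,b$ become the pointwise products $\alpha(\omega)\,\varphi_\Psi(\beta(\omega))$ and $\Psi(\alpha(\omega))\,\beta(\omega)$. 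Consequently the asserted operator equality holds if and only if, for almost every $\omega$,
$$\chi_{(1,\infty)}\big(\alpha(\omega)\,\varphi_\Psi(\beta(\omega))\big) = \chi_{(1,\infty)}\big(\Psi(\alpha(\omega))\,\beta(\omega)\big),$$
so that it suffices to establish the scalar identity $\chi_{(1,\infty)}(x\,\varphi_\Psi(y)) = \chi_{(1,\infty)}(\Psi(x)\,y)$ for all $x,y\geq 0$.

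For the scalar identity I would use the explicit description of the fundamental function recalled just before the lemma, namely $\varphi_\Psi(y) = 1/\Psi^{-1}(1/y)$ for the Luxemburg norm. With this in hand, the left-hand indicator equals $1$ precisely when $x\,\varphi_\Psi(y) > 1$, i.e.\ when $x > \Psi^{-1}(1/y)$, while the right-hand indicator equals $1$ precisely when $\Psi(x)\,y > 1$, i.e.\ when $\Psi(x) > 1/y$. These two conditions coincide by the defining Galois property of the generalized inverse: for a non-decreasing $\Psi$ (with the convention $\Psi^{-1}(s)=\sup\{u\geq 0:\Psi(u)\leq s\}$) one has $x > \Psi^{-1}(s) \iff \Psi(x) > s$, here applied with $s=1/y$. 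This monotonicity equivalence is the sole analytic input, and substituting it closes the argument; the continuity of $\Psi$ on the interior of its domain, which follows from convexity, is what upgrades the one-sided bounds into the required strict biconditional.

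The main obstacle, and the only place where genuine care is needed, is the bookkeeping around the generalized inverse and the degenerate values. One must fix the convention for $\Psi^{-1}$ so that the strict-inequality equivalence above is valid, and then separately dispose of the boundary cases: $y=0$ or $y=\infty$ (where $1/y$ and hence $\varphi_\Psi(y)$ degenerate), $x=0$ or $x=\infty$, and the possibility that $\Psi$ vanishes near $0$ or is constant on an interval, which makes $\Psi^{-1}$ only a generalized inverse rather than a true inverse. In each of these cases both indicators are seen to take the same value by direct inspection, the decisive facts being $\Psi(0)=0$ together with the monotonicity and convexity of $\Psi$. Once these edge cases are checked, the equivalence of the two strict inequalities is immediate, and the pointwise identity---and with it the operator identity---follows. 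This is precisely the content of the relevant portion of the proof of \cite[Theorem 2.2]{L}.
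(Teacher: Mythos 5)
Your proposal is correct and follows essentially the same route as the paper: reduce to a scalar statement via the Borel functional calculus for the commuting pair $a,b$, then verify that statement using the formula $\varphi_\Psi(t)=1/\Psi^{-1}(1/t)$ together with the equivalence $\alpha\Psi(\beta)\leq 1 \Leftrightarrow \beta\leq \Psi^{-1}(1/\alpha)$. The paper's proof is simply a terser version of yours, omitting the edge-case and generalized-inverse bookkeeping that you spell out.
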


\begin{proof}
Let $\alpha, \beta > 0$ be given. It is a known fact that $\alpha \Psi(\beta) \leq 1 \Leftrightarrow \beta \leq \Psi^{-1}(\frac{1}{\alpha})$. If we apply this fact to the Borel functional calculus for the commuting positive operators $a$ and $b$, we have that $\chi_{(1,\infty)}(a\varphi_\Psi(b)) = \chi_{(1,\infty)}(\Psi(a)b)$ as required.
\end{proof}

The above lemma now enables us to make the following conclusion:

\begin{proposition}\label{prop5.2}
Let $\vartheta$, $\phi$ be faithful normal states on $\qM$ with unit vector representatives $h_\vartheta^{1/2}$, $h_\phi^{1/2}$, which commute in the sense that they satisfy one (and therefore all) of the criteria described in \cite[Corollary VIII.3.6]{Tak2}. Assume in 
addition that $\phi\leq \vartheta(\delta)$ for some $\delta>0$, . With $\varphi_{\log}$ denoting the fundamental function of 
the space $L\log(L+1)(\mathbb{R})$ (equipped with the Luxemburg norm), we then have that 
\begin{itemize}
\item $h_\vartheta$ and $h_\phi$ are commuting operators affiliated to $\mathcal{M}$, 
\item $f= h_\vartheta h_\phi^{-1}$ extends uniquely to an element of $\qM$,
\item and $S(\vartheta|\phi)= \phi(f\log(f)) =\inf_{\epsilon >0} [\epsilon\tau(\chi_{(\epsilon, \infty)}(\varphi_{\log}(h_\phi)f) + \log(\epsilon)\|h_\phi f\|_1]$.
\end{itemize}
\end{proposition}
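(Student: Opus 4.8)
The plan is to establish the three bullet points in sequence, treating the first two as structural preliminaries that set up the genuinely analytic content of the third. For the first bullet, I would observe that $h_\vartheta$ and $h_\phi$ arise as the dual-weight densities $\frac{D\widetilde\vartheta}{D\tau}$ and $\frac{D\widetilde\phi}{D\tau}$ affiliated to $\cM$, as recorded just before Theorem \ref{thm4.3}. The commutation of the representing vectors $h_\vartheta^{1/2}, h_\phi^{1/2}$ in the sense of \cite[Corollary VIII.3.6]{Tak2} is precisely the condition ensuring that the associated modular objects commute; translated through the crossed-product correspondence, this forces $h_\vartheta$ and $h_\phi$ to be commuting affiliated operators. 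I would spell this out by noting that commutation of the states means their cocycle $(D\vartheta:D\phi)_t = h_\vartheta^{it}h_\phi^{-it}$ reduces to a one-parameter family of commuting unitaries, so the spectral projections of $h_\vartheta$ and $h_\phi$ commute.

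For the second bullet, the key input is the domination hypothesis $\phi\leq\vartheta(\delta)$. By the definition of this ordering (and its specialisation in Theorem \ref{2.2} when $\delta=\tfrac12$), the cocycle $t\to(D\phi:D\vartheta)_t = h_\phi^{it}h_\vartheta^{-it}$ extends analytically and boundedly into the strip $\{-\delta\leq\Im(z)\leq 0\}$, with the boundary value at $z=-i$ giving a bounded $\qM$-valued operator. Since $h_\vartheta$ and $h_\phi$ commute, this analytic extension is literally the functional-calculus expression $h_\phi\, h_\vartheta^{-1}$ read off along the commuting spectrum, and its boundedness as an element of $\qM$ yields $f = h_\vartheta h_\phi^{-1}$ as a bounded (hence $\qM$-affiliated, in fact $\qM$-valued) operator. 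I would invoke \cite{Kos} for the precise identification of the analytic continuation with the commuting functional calculus, exactly as done in the proof of Theorem \ref{thm4.3}.

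For the third bullet I would split it into the two asserted equalities. The equality $S(\vartheta|\phi)=\phi(f\log f)$ I would derive from Theorem \ref{thm4.3}: in the commuting case the limiting expression $\lim_{s\nearrow 1}tr(h_\vartheta^s\log h_\vartheta\, h_\phi^{1-s} - h_\vartheta^s\log h_\phi\, h_\phi^{1-s})$ collapses, via the functional calculus for the commuting pair, to $tr(h_\vartheta\log(h_\vartheta h_\phi^{-1})) = tr(h_\phi^{1/2} f\log(f)\, h_\phi^{1/2}) = \phi(f\log f)$, where I use $h_\vartheta = h_\phi^{1/2} f h_\phi^{1/2}$ and the trace property. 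The second equality, expressing $\phi(f\log f)$ through the fundamental function $\varphi_{\log}$ of $L\log(L+1)(\mathbb{R})$, is where Lemma \ref{mainthm} does the real work: applying that lemma with $\Psi$ the defining Orlicz function for $L\log(L+1)$ gives the identity $\chi_{(1,\infty)}(a\varphi_{\log}(b)) = \chi_{(1,\infty)}(\Psi(a)b)$, which after rescaling by $\epsilon$ converts the layer-cake (distribution-function) integral for $\phi(f\log f) = tr(h_\phi f\log f)$ into the stated infimum over $\epsilon$ of $\epsilon\tau(\chi_{(\epsilon,\infty)}(\varphi_{\log}(h_\phi)f)) + \log(\epsilon)\|h_\phi f\|_1$.

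\textbf{The main obstacle} I expect is the final identification in the third bullet: rewriting the trace $tr(h_\phi f\log f)$ as the Orlicz-fundamental-function infimum. This requires correctly matching the Orlicz function $\Psi$ associated with $L\log(L+1)$ to the integrand $f\log f$, verifying that the measure-theoretic layer-cake decomposition on $\mathbb{R}$ transfers to the trace $\tau$ on the semifinite crossed product $\cM$, and handling the two regimes $f\log f \lessgtr \epsilon$ so that the correction term $\log(\epsilon)\|h_\phi f\|_1$ accounts exactly for the truncated tail. The commutativity of $h_\vartheta, h_\phi$ (hence of $f$ and $h_\phi$) is essential here, since it is what licenses the simultaneous Borel functional calculus underlying Lemma \ref{mainthm}; the delicate point is ensuring that the infimum is attained (or approached) in a way compatible with the $\tau$-measurability of $\varphi_{\log}(h_\phi)f$, which follows from the regularity of $x^s\log x$ noted in the discussion preceding Theorem \ref{thm4.3}.
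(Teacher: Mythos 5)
Your overall route matches the paper's, but two of your justifications have concrete gaps. First bullet: knowing that the cocycle $(D\vartheta:D\phi)_t=h_\vartheta^{it}h_\phi^{-it}$ is a commuting family of unitaries does \emph{not} by itself give $[h_\vartheta,h_\phi]=0$ — the products $h_\vartheta^{it}h_\phi^{-it}$ can commute among themselves without the factors commuting. The paper's argument needs more: by \cite[Corollary VIII.3.6]{Tak2} the commutation of the states yields $(D\phi:D\vartheta)_t=h^{it}$ with $h$ affiliated to the centraliser $\qM_\phi$; feeding this into the cocycle identity $u_{t+s}=u_s\,\sigma^\phi_s(u_t)$ gives $h^{it}=h_\phi^{is}h^{it}h_\phi^{-is}$, i.e.\ each $h^{it}$ commutes with each $h_\phi^{is}$, and only then does $h_\vartheta^{it}=h^{it}h_\phi^{it}$ force $h_\vartheta^{it}$ to commute with $h_\phi^{is}$. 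You should supply this intermediate step rather than passing directly to ``the spectral projections commute.''

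Second bullet: you evaluate the analytic extension of $t\mapsto(D\phi:D\vartheta)_t$ at $z=-i$, but the hypothesis $\phi\leq\vartheta(\delta)$ only provides an extension to the strip $\{-\delta\leq\Im(z)\leq 0\}$, and $\delta$ may be much smaller than $1$; the point $z=-i$ is simply not in the domain. The paper instead evaluates at a small imaginary displacement $r\leq\delta$ to get $h_\vartheta^{r}h_\phi^{-r}$ bounded, and then uses the commutativity established in the first bullet to take the $1/r$-th power via continuous functional calculus, obtaining $(h_\vartheta^{r}h_\phi^{-r})^{1/r}=h_\vartheta h_\phi^{-1}\in\qM$. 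Without this power trick your second bullet fails for $0<\delta<1$. Your treatment of the third bullet is essentially the paper's (Theorem \ref{thm4.3} plus Lemma \ref{mainthm}), though note that the passage from $\phi(f\log f)$ to the infimum is not a layer-cake/truncation argument: it rests on the exact identity $x\log(x+\epsilon)=\epsilon(x/\epsilon)\log((x/\epsilon)+1)+x\log\epsilon$ together with $\inf_{\epsilon>0}x\log(x+\epsilon)=x\log x$ (as in \cite[Proposition 6.8]{[2]}), and on Terp's identity $tr(x)=\tau(\chi_{(1,\infty)}(x))$ for positive $x\in L^1(\qM)$, which is what converts $tr$-expressions into $\tau$-expressions before Lemma \ref{mainthm} is applied.
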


\begin{proof}
The first step is to show that $h_\vartheta$ and $h_\phi$ commute. 

It is clear from the proof of \cite[Corollary VIII.3.6]{Tak2}, that the commutation of  $\vartheta$ and $\phi$, ensures the existence of an operator $h$ affiliated to the centraliser $\qM_\phi$ of $\phi$ for which we have that $(D\phi:D\vartheta)_t=h^{it}$. But from the discussion preceding Theorem \ref{thm4.3} we know that $(D\vartheta:D\phi)_t=(D\widetilde{\vartheta}:D\widetilde{\phi})_t= h^{it}_\vartheta h^{-it}_\phi$. In other words for each $t$, $h^{it} =  h^{it}_\vartheta h^{it}_\phi$.

On appealing to the properties of the cocycle derivative, we may now conclude that
\begin{eqnarray*}
h^{i(t+s)}&=&(D\widetilde{\vartheta}:D\widetilde{\phi})_{t+s}\\
&=&(D\widetilde{\vartheta}:D\widetilde{\phi})_{s}\sigma^\phi_s((D\widetilde{\vartheta}:D\widetilde{\phi})_{t})\\
&=& h^{is}h_\phi^{is} h^{it} h_\phi^{-is}
\end{eqnarray*}
or equivalently, $h^{it}=h_\phi^{is} h^{it} h_\phi^{-is}$. So each $h^{it}$ commutes with each $h_\phi^{is}$.

But we saw earlier that $h^{it} = h^{it}_\vartheta h^{-it}_\phi$, or equivalently that $h^{it}_\vartheta=h^{it}h^{it}_\phi$. Together these two facts ensure that each $h_\vartheta^{it}$ commutes with each $h_\phi^{is}$. We may now use the Borel functional calculus to conclude from these two facts that $h_\phi$ and $h_\vartheta$ themselves also commute. This proves the first bullet.

To see the second bullet, we note from the proof of Theorem \ref{thm4.3} that the requirement that $\phi\leq \vartheta(\delta)$, ensures that for $r>0$ small enough, $h_\vartheta^r h_\phi^{-r}$ extends uniquely to an element of $\qM$. Since $h_\phi$ and $h_\vartheta$ commute, this clearly ensures that the closure of $(h_\vartheta^r h_\phi^{-r})^{1/r}=h_\vartheta h_\phi^{-1}$ also belongs to $\qM$.

For the final bullet, note that by the Borel functional calculus, the commutation of $h_\phi$ and $h_\vartheta$, ensures that we may write the limit formula $\lim_{s\nearrow 1} tr(h_{\vartheta}^{s}\cdot \log h_{\vartheta} \cdot  h_{\phi}^{1 - s} - h_{\vartheta}^{s} \cdot \log h_{\phi} \cdot h_{\phi}^{1 - s})$ as $\lim_{s\nearrow 1} tr(f^s\log(f)h_{\phi})$ where $f=h_\vartheta h_\phi^{-1}$. It also follows from the proof of Theorem \ref{thm4.3}, that there exists an interval $[\delta, 1]$ for which $s\to f^s$ is point-weak* continuous. So given $\rho$ with $\delta<\rho<1$, we may write the limit formula as $\lim_{r\nearrow \rho} tr(f^r[f^{(1-\rho)}\log(f)]h_{\phi})$. We may now use the continuous functional calculus to see that since $f\in \qM$, we must have that $f^{(1-\rho)}\log(f) \in \qM$. But then $[f^{(1-\rho)}\log(f)]h_{\phi}\in L^1(\qM)$. The point-weak* continuity of the map 
$r\to f^r$ on $[\delta, \rho]$, now ensures that $\lim_{r\nearrow \rho} tr(f^r[f^{(1-\rho)}\log(f)]h_{\phi})=tr(f\log(f)h_\phi)=\phi(f\log(f))$.

To prove the final equality, one firstly uses a similar argument to the one in 
\cite[Proposition 6.8]{[2]} to see that $tr(h_\phi f\log(f))= \inf_{\epsilon >0}[\epsilon tr(h_\phi(f/\epsilon)\log((f/\epsilon)+\I)) +\log(\epsilon)tr(h_\phi f)]$. On 
combining the preceding Lemma with \cite[Lemma II.5 \& Def II.13]{terp}, we then have that 
$$tr(h_\phi(f/\epsilon)\log((f/\epsilon)+\I)) = \tau(\chi_{(1,\infty)}(h_\phi(f/\epsilon)\log((f/\epsilon)+\I))$$ 
$$=\tau( \chi_{(1,\infty)} (\varphi_{\log}(h_\phi)f/\epsilon))= \tau(\chi_{(\epsilon,\infty)}(\varphi_{\log}(h_\phi)f)).$$This proves the final claim.
\end{proof}

\bigskip

We are now finally ready to present the definition of the entropy ${S}(\vartheta)$ of a faithful normal state $\vartheta$. The basic idea is to use the above result as guide, for the kind of technical prescription that might work. Tempting as it may be to simply replace $f=h_\vartheta h_\phi^{-1}$ with $h_\vartheta$, and $\phi$ with $tr$, to obtain $tr(h_\vartheta\log(h_\vartheta))$ as a definition, this cannot possibly work. The problem with this prescription is that $tr$ is only defined on $L^1(\qM)$ where in the crossed product setting, all the elements $h$ of $L^1(\qM)$ have to satisfy the requirement that $\theta_s(h)=e^{-s}h$ for all $s$. Since $h_\vartheta\in L^1(\qM)$ we do have that $\theta_s(h_\vartheta)=e^{-s}h_\vartheta$, But then $\theta_s(h_\vartheta\log(h_\vartheta))=e^{-s}h_\vartheta\log(e^{-s}h_\vartheta)\neq e^{-s}h_\vartheta\log(h_\vartheta)$. However the final equality in the third bullet of Proposition \ref{prop5.2}, does present us with a means for overcoming this difficulty for a subspace of $L^1(\qM)$. The subspace in question is the noncommutative Orlicz space $L^1\cap L\log(L+1)(\qM)$.
 
Some analysis is necessary before we are able to present the definition.  Note that classically $L^1\cap L\log(L+1)$ is an Orlicz space produced by the Young's function 
$$\Psi_{ent}(t)=\max(t,t\log(t+1))=\left\{ \begin{array}{ll} t & 0\leq t\leq e-1\\ t\log(t+1) & e-1\leq t\end{array} \right. $$We start by describing how to construct the type III analogue of the space $L^1\cap L\log(L+1)$. We will for simplicity of computation assume that each of $L\log(L+1)(0,\infty)$ and $L^1\cap L\log(L+1)(0,\infty)$ are equipped with the 
Luxemburg norm. It is then an exercise to see that the fundamental function of $L^1\cap L\log(L+1)(0,\infty)$ is of the form $\varphi_{ent}(t)=\max(t,\varphi_{\log}(t))$. It is this fundamental function that one uses to construct the type III analogue of $L^1\cap L\log(L+1)$ in accordance with the prescriptions given in \cite{L, LM}. Let us for the sake of brevity denote this space by $L^{ent}(\mathfrak{M})$. We now show that this space canonically embeds into both $L^1(\qM)$ and $L\log(L+1)(\qM)$.

From the above computations, it is clear that the functions $\zeta_1(t)= \frac{t}{\varphi_{ent}(t)}$, and $\zeta_{\log}(t)= \frac{\varphi_{\log}(t)}{\varphi_{ent}(t)}$ are both continuous and bounded above (by 1) on $(0,\infty)$. Hence for $h=\frac{D\tilde{\omega}}{D\tau}$, the operators $\zeta_1(h)$ and $\zeta_{\log}(h)$ are both contractive 
elements of $\cM$. It is now an exercise to see that the prescriptions $x\to\zeta_1(h)^{1/2}x\zeta_1(h)^{1/2}$ and $x\to\zeta_{\log}(h)^{1/2}x\zeta_{\log}(h)^{1/2}$ respectively yield continuous embeddings of $L^{ent}(\mathfrak{M})$ into 
$L^1(\mathfrak{M})$ and $L\log(L+1)(\mathfrak{M})$. Using these embeddings, we now make the following definition:

\begin{definition}\label{genentropy}
\label{defent}
A state $\vartheta$ on the von Neumann algebra $\qM$ is called \emph{regular} if for some element $g$ of $[L\log(L+1)\cap L^1](\mathfrak{M})^+=L^{ent}(\mathfrak{M})^+$, 
$\frac{D\tilde{\vartheta}}{D\tau}$ is of the form $\zeta_1(h)^{1/2}g\zeta_1(h)^{1/2}$. For such a regular state we then define the entropy to be 
$$\tilde{S}(\vartheta)=\inf_{\epsilon >0}[\epsilon\tau(\chi_{(\epsilon, \infty)}(\zeta_{\log}(h)^{1/2}g\zeta_{\log}(h)^{1/2}))
+ \log(\epsilon)\|\zeta_1(h)^{1/2}g\zeta_1(h)^{1/2}\|_1].$$(Here $h$ is the density $\frac{D\widetilde{\omega}}{D\tau}$ of the dual weight $\widetilde{\omega}$.)
\end{definition}

\begin{remark}
What we must clarify is the meaning of the term ``regularization'' used in the above definition. The density $h = \frac{D\widetilde{\omega}}{D\tau}$ is related, by Bisognano-Wichmann results \cite{BW1}, \cite{BW2}, to the equilibrium hamiltonian, cf Remark 2.11 in \cite{LM}. This gives a relation to thermodynamics of equilibrium states, cf Remark \ref{2.4a}. Further, $\frac{D\widetilde{\omega}}{D\tau}$ is in $L^1(\qM)$ space. This and Definition \ref{defent}
imply that the regularization procedure stems from the prescription leading to the construction of $L^{ent}(\qM)$ space, see Definition 3.4 in \cite{L}.

On the other hand, it is worth noting that the same procedure was used to define $\tau$-measurability of quantum field operators, see \cite{LM}. Consequently, the regularization procedure is based on the selection of such measurable operators which are good candidates for representing states and this selection is compatible with the new formalism of statistical mechanics. We remind that this new formalism is based on the distinguished pair of Orlicz spaces $\langle L^{\cosh - 1}(\qM), L\log(L+1)(\qM)\rangle$, for details see \cite{[2]}, \cite{[3]}.
\end{remark}
One has (cf \cite{LM})

\begin{corollary}
If $\vartheta$ is a regular state, then $\tilde{S}(\vartheta)$ is well defined (although possibly infinite valued).
\end{corollary}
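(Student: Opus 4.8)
The plan is to verify well-definedness in two stages. First I would check that the representing element $g$ is uniquely pinned down by $\vartheta$, so that the prescription in Definition~\ref{defent} is unambiguous; and second that, for each fixed $\epsilon>0$, both summands inside the infimum are finite real numbers, so that the infimand is a genuine real and the infimum itself is a well-defined element of the extended reals (whence the qualifier ``possibly infinite valued''). The point is that there is nothing to compute about the limiting/relative structure here: all the analytic work establishing that the infimum formula represents the entropy was already carried out in Proposition~\ref{prop5.2}; the corollary is purely a statement that the two ingredients $\zeta_{\log}(h)^{1/2}g\,\zeta_{\log}(h)^{1/2}$ and $\zeta_1(h)^{1/2}g\,\zeta_1(h)^{1/2}$ are of the right integrability type.

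For the uniqueness of $g$, recall that $h=\frac{D\widetilde{\omega}}{D\tau}$ is the density of the \emph{faithful} dual weight $\widetilde{\omega}$, so that $h$ is a faithful positive operator, and that $\zeta_1(t)=t/\varphi_{ent}(t)>0$ for every $t>0$. Hence $\zeta_1(h)$ has trivial kernel, and the map $x\mapsto\zeta_1(h)^{1/2}x\,\zeta_1(h)^{1/2}$ is by construction (cf.\ \cite{L}) a genuine \emph{embedding} of $L^{ent}(\mathfrak{M})$ into $L^1(\qM)$, hence injective. Consequently $g\in L^{ent}(\mathfrak{M})^+$ is the unique preimage of the density $\frac{D\widetilde{\vartheta}}{D\tau}$, and $\widetilde{S}(\vartheta)$ depends only on $\vartheta$.

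For the finiteness of the two summands, write $b=\zeta_1(h)^{1/2}g\,\zeta_1(h)^{1/2}$ and $a=\zeta_{\log}(h)^{1/2}g\,\zeta_{\log}(h)^{1/2}$. By the two embeddings set up immediately before Definition~\ref{defent} we have $b\in L^1(\qM)^+$ and $a\in L\log(L+1)(\qM)^+$, so in particular $\|b\|_1<\infty$, which disposes of the second summand. For the first summand one must show $\tau(\chi_{(\epsilon,\infty)}(a))<\infty$ for every $\epsilon>0$. Membership of $a$ in the (Luxemburg) Orlicz space gives $\tau(\Psi_{\log}(a/c))<\infty$ for some $c>0$, where $\Psi_{\log}(t)=t\log(t+1)$; since $\Psi_{\log}$ is increasing with $\Psi_{\log}(\epsilon/c)>0$, the Borel functional calculus yields the operator inequality $\Psi_{\log}(a/c)\geq\Psi_{\log}(\epsilon/c)\,\chi_{(\epsilon,\infty)}(a)$, and hence by \cite[Lemma~II.5 \& Def.~II.13]{terp},
\[
\tau\bigl(\chi_{(\epsilon,\infty)}(a)\bigr)\le\frac{1}{\Psi_{\log}(\epsilon/c)}\,\tau\bigl(\Psi_{\log}(a/c)\bigr)<\infty.
\]
Thus each infimand $\epsilon\,\tau(\chi_{(\epsilon,\infty)}(a))+\log(\epsilon)\|b\|_1$ is a finite real, so the infimum over $\epsilon>0$ is a well-defined element of $[-\infty,+\infty]$, as claimed.

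The main obstacle is exactly this last step: converting the abstract membership of $a$ in the ``type~III'' Orlicz space $L\log(L+1)(\qM)$ into the \emph{level-wise} finiteness $\tau(\chi_{(\epsilon,\infty)}(a))<\infty$ for \emph{all} $\epsilon>0$, rather than merely for $\epsilon$ above some threshold (which is all that bare $\tau$-measurability would supply). This rests on the Chebyshev-type bound above together with the precise compatibility between the fundamental-function construction of \cite{L} and Terp's functional calculus for $\tau$-measurable operators. I would emphasize that because $g$ and $h$ need not commute, Lemma~\ref{mainthm} cannot be invoked directly as it was in Proposition~\ref{prop5.2}; the argument must instead proceed purely through the Orlicz estimate and the boundedness of $\zeta_1$ and $\zeta_{\log}$, which is precisely what makes the two embeddings land in $L^1(\qM)$ and $L\log(L+1)(\qM)$ respectively.
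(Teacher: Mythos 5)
First, a point of comparison: the paper offers no proof of this corollary at all --- it is simply cited from \cite{LM} --- so your argument can only be assessed on its own merits. Its architecture is sensible: reducing well-definedness to (i) the unambiguity of $g$ and (ii) the infimum making sense in the extended reals is the right reading of Definition \ref{defent}, and your injectivity argument for $x\mapsto\zeta_1(h)^{1/2}x\zeta_1(h)^{1/2}$ (faithfulness of $\widetilde{\omega}$ plus strict positivity of $\zeta_1$ on $(0,\infty)$) is fine. Note, however, that for bare well-definedness your ``main obstacle'' is not actually an obstacle: the first summand lies in $[0,+\infty]$ and the second is a finite real (since $\zeta_1(h)^{1/2}g\zeta_1(h)^{1/2}=\frac{D\widetilde{\vartheta}}{D\tau}\in L^1(\qM)$ with $\|\cdot\|_1$-norm equal to $\vartheta(\I)=1$), so each infimand is a well-defined element of $(-\infty,+\infty]$ and the infimum a well-defined element of $[-\infty,+\infty]$, which is all the corollary asserts.

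The genuine gap lies in your treatment of the first summand. You extract from $a=\zeta_{\log}(h)^{1/2}g\zeta_{\log}(h)^{1/2}\in L\log(L+1)(\qM)$ the modular condition $\tau(\Psi_{\log}(a/c))<\infty$ for some $c>0$. That characterization of Orlicz-space membership is valid for the tracial construction, but not for the crossed-product spaces of \cite{L} used here: $\tau$ is the canonical trace of the superalgebra $\cM$, and elements of these Haagerup-type spaces are never integrable against it in the modular sense. Already for $b\in L^1(\qM)^+$ one has $\tau(\chi_{(\epsilon,\infty)}(b))=\epsilon^{-1}\|b\|_1$ (see \cite{FK} and Section 6 of the paper), whence $\mu_t(b)=\|b\|_1/t$ and $\tau(b)=\int_0^\infty\mu_t(b)\,dt=\infty$ for every nonzero $b$; the same scaling phenomenon defeats the finiteness of $\tau(\Psi_{\log}(a/c))$, so your Chebyshev bound starts from a false premise. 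The level-wise finiteness $\tau(\chi_{(\epsilon,\infty)}(a))<\infty$ for every $\epsilon>0$ is nevertheless true, but it must be obtained from the structural description of $L\log(L+1)(\qM)$ rather than from a modular: by \cite[Theorem 2.2]{L} (or, in the spirit of Lemma \ref{mainthm}, from the majorization of $\mu_t(a)$ by a constant multiple of $\mu_t(\varphi_{\log}(h))=\varphi_{\log}(1/t)$, valid without any commutation of $g$ and $h$), the distribution function $d_\epsilon(a)=\tau(\chi_{(\epsilon,\infty)}(a))$ is finite for all $\epsilon>0$ because $\varphi_{\log}(1/t)\to 0$ as $t\to\infty$. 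So your conclusion stands, but the decisive analytic step needs to be rerouted through the singular-value machinery of \cite{FK} and \cite{L}.
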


We proceed to prove a result establishing criteria under which a version of Equation \ref{1.6} holds in the present setting. Note that in this result, the faithful KMS state $\omega$ plays the role of the reference measure $\lambda$.

\begin{theorem}
\label{5.6}
Let $\qM$ be a von Neumann algebra in standard form equipped with the faithful normal state $\omega$ and let $\vartheta$ be a faithful regular state with $h_\vartheta$ of the form $\zeta_1(h)^{1/2}g\zeta_1(h)^{1/2}$, where $g\in (L\log(L+1)\cap L^1)_+(\qM)=L^{ent}_+(\qM)$ commutes with $h_\omega=h$. Then $\tilde{S}(\vartheta)=S(\omega|\vartheta)$.
\end{theorem}

Readers are urged to review the discussion immediately following Remark \ref{rem-4.3} before engaging the proof of this theorem.

\begin{proof} We will write $h_\vartheta$ for $\frac{d\widetilde{\vartheta}}{d\tau}$. By assumption $g$ and $h$, and therefore $h_\vartheta$ and $h$, are commuting affiliated operators. Hence so are $p=h_\vartheta h^{-1}$ and $h$. The proof makes extensive use of the kinds of techniques employed in Theorem \ref{ar-ent}. Given $a\in \qM$ and $b\in L^1_+(\qM)$, we will for this reason once again employ the notational device of writing $\langle ab^{1/2}, b^{1/2}\rangle$ for $tr(ba)$.

In this case, let $\lambda\to e_\lambda$ be the spectral resolution of $p$. The fact that both $h$ and $h_\vartheta$ belong to $L^1(\qM)$ ensures
that $\theta_s(p)=\theta_s(h_\vartheta)\theta_s(h^{-1})=h_\vartheta h^{-1}=p$ for every $s\in \mathbb{R}$. This in turn is enough to ensure that $p$ is actually affiliated with the copy $\pi_\omega(\qM)$ of $\qM$ inside the crossed product $\cM$. Therefore, the spectral projections $e_\lambda$ are all elements
of $\qM$. The first stage of the proof is to show that in general $S(\omega|\vartheta)$ is finite if and only if the integral $\int_0^\infty\lambda\log(\lambda)\,d\langle e_\lambda h^{1/2}, h^{1/2}\rangle$ converges, in which case they are equal (where the integral $\int_0^\infty\lambda\log(\lambda)\,d\langle e_\lambda h^{1/2}, h^{1/2}\rangle$ represents $\langle p\log(p)h^{1/2},h^{1/2}\rangle= tr(h^{1/2}p\log(p)h^{1/2})$). Additionally, that the only way the integral can diverge is by diverging to $\infty$. 

Araki entropy is formulated using technology developed for the GNS setting. So one of the main challenges we need to overcome to achieve the stated objective, is to understand how to translate expressions like $\langle p\log(p)h^{1/2}, h^{1/2}\rangle$ to the GNS setting. Here we shall in passing make use of the equivalence of Haagerup $L^p$-spaces with Araki-Masuda and Connes-Hilsum $L^p$-spaces.

By \cite[Theorem II.37 \& Corollary II.38]{terp} we may freely pass to the setting where $\vartheta$ rather than $\omega$ is being used as the reference weight. If after having done this we then use the weight 
$\vartheta'=\vartheta(J\cdot J)$ on $\qM'$ to construct Connes-Hilsum spaces, it follows from a combination of \cite[Theorem 2]{Hil} and \cite[Proposition 11.6]{Hiai-book} that, in moving from the 
Haagerup to the Connes-Hilsum context, $h_\omega$ and $h_\vartheta$ will then respectively be mapped onto $\Delta_{\omega,\vartheta} \otimes L (\equiv \frac{d\omega}{d\vartheta'}\otimes L)$ and 
$\Delta_{\vartheta} \otimes L (\equiv \frac{d\vartheta}{d\vartheta'}\otimes L)$ by means of a spatial *-isomorphism, where $L$ is here a densely defined positive operator on $L^2(\mathbb{R})$. (See 
also the discussion preceding Lemma 11.30 and equation (11.25) of \cite{Hiai-book}.) The commutation of the pair $h_\omega$ and $h_\vartheta$  is of course carried over to the pair 
$\Delta_{\omega,\vartheta}\otimes L$ and $\Delta_\vartheta\otimes L$. In this context the strong product $p= h_\vartheta h_\omega^{-1}$ will then correspond to the strong product of  
$\Delta_{\omega,\vartheta}^{-1}\otimes L^{-1}$ and $\Delta_{\vartheta}\otimes L$, namely $\Delta_\vartheta\Delta_{\omega,\vartheta}^{-1}\otimes \I$. By \cite[Lemma 11.30]{Hiai-book} the spatial 
isomorphism maps $\pi(\qM)$ onto $\qM\otimes\I$ and since the affiliation of $p$ to $\pi_\vartheta(\qM)$ is preserved by the spatial isomorphism, the operator 
$\Delta_\vartheta\Delta_{\omega,\vartheta}^{-1}\otimes \I$ is affiliated to $\qM\otimes\I$. From the description of the action of the spatial isomorphism on $\pi(\qM)$ in \cite[Lemma 11.30]{Hiai-book} 
this can only be if in fact $p=\pi(\Delta_\vartheta\Delta_{\omega,\vartheta}^{-1})$, which in turn ensures that $tr(h^{1/2}p\log(p)h^{1/2}) =\langle p\log(p)h_\omega^{1/2},h_\omega^{1/2}\rangle = 
\langle\Delta_\vartheta\Delta_{\omega,\vartheta}^{-1}\log(\Delta_\vartheta\Delta_{\omega,\vartheta}^{-1}) \Omega, \Omega\rangle$.

With the translation of the inner product $\langle p\log(p)h^{1/2},h^{1/2}\rangle$ from the one context to the other clarified, we pass to the next phase of the proof. We may make sense of the quantity $\langle p\log(p)h^{1/2},h^{1/2}\rangle$ as a spectral integral where $\langle p\log(p)h^{1/2},h^{1/2}\rangle = 
\int_0^\infty\lambda\log(\lambda)\,d\langle e_\lambda h^{1/2}, h^{1/2}\rangle$. Since $x\log(x)$ induces a bounded continuous function on [0,1], this integral diverges iff 
$\int_1^\infty\lambda\log(\lambda)\,d\langle e_\lambda h^{1/2}, h^{1/2}\rangle = \infty$. On now using the fact that $\Delta_{\omega,\vartheta}^{1/2}\Theta=\Omega$ 
(see \cite[Proposition 10.6]{Hiai-book} for this) and appealing to the Borel functional calculus for commuting densely defined operators, it now follows that 
\begin{eqnarray*}
\langle p\log(p)h^{1/2},h^{1/2}\rangle &=& \langle\Delta_\vartheta\Delta_{\omega,\vartheta}^{-1}\log(\Delta_\vartheta\Delta_{\omega,\vartheta}^{-1}) \Omega, \Omega\rangle\\
&=& \langle\Delta_\vartheta\Delta_{\omega,\vartheta}^{-1}\log(\Delta_\vartheta\Delta_{\omega,\vartheta}^{-1})\Delta_{\omega,\vartheta}^{1/2}\Theta, \Delta_{\omega,\vartheta}^{1/2}\Theta\rangle\\
&=& \langle\Delta_\vartheta(\log(\Delta_\vartheta)-\log(\Delta_{\omega,\vartheta}))\Theta, \Theta\rangle\\
&=& -\langle\log(\Delta_{\omega,\vartheta})\Theta, \Theta\rangle\\
&=& S(\omega|\vartheta).
\end{eqnarray*}
Here we used the fact that $\Theta$ is an eigenvector of $\Delta_\vartheta$ with eigenvalue 1 to obtain the second to last equality.

The next part of the proof is to show that (infinite values included), we have that $$\inf_{\epsilon >0}\int_0^\infty\lambda\log(\lambda+\epsilon)\,d\langle e_\lambda h^{1/2}, h^{1/2}\rangle= \int_0^\infty\lambda\log(\lambda)\,d\langle e_\lambda h^{1/2}, h^{1/2}\rangle$$from which we will then be able to deduce the claim. To see this next fact, observe that for any $\lambda>0$ and $\epsilon>0$, we have that $\log(\lambda+\epsilon)>\log(\lambda)$ and that $$0< \lambda(\log(\lambda+\epsilon)-\log(\lambda))=\lambda\log(1+(\epsilon/\lambda))\leq\epsilon.$$This ensures that
\begin{eqnarray*}
\int_0^\infty\lambda\log(\lambda)\,d\langle e_\lambda h^{1/2}, h^{1/2}\rangle &\leq& \int_0^\infty\lambda\log(\lambda+\epsilon)\,d\langle e_\lambda h^{1/2}, h^{1/2}\rangle\\
&\leq& \int_0^\infty(\lambda\log(\lambda)+\epsilon)\,d\langle e_\lambda h^{1/2}, h^{1/2}\rangle\\
&=& \int_0^\infty\lambda\log(\lambda)\,d\langle e_\lambda h^{1/2}, h^{1/2}\rangle +\epsilon\omega(\I)\\
&=& \int_0^\infty\lambda\log(\lambda)\,d\langle e_\lambda h^{1/2}, h^{1/2}\rangle +\epsilon,
\end{eqnarray*}
which establishes the claim.

If we combine the two facts we have proved thus far, it yields the conclusion that (infinite values included) we always have that
$$S(\omega|\vartheta)=\inf_{\epsilon >0}\int_0^\infty\lambda\log(\lambda+\epsilon)\,d\langle e_\lambda h^{1/2}, h^{1/2}\rangle.$$We now use this formula to show that $S(\omega|\vartheta)=\tilde{S}(\vartheta)$. This claim will follow if we can show that for any
$\epsilon>0$, the equality $\epsilon\tau(\chi_{(\epsilon, \infty)}(\zeta_{\log}(h)^{1/2}g\zeta_{\log}(h)^{1/2}))
+ \log(\epsilon)\|\zeta_1(h)^{1/2}g\zeta_1(h)^{1/2}\|_1= \int_0^\infty(\lambda\log(\lambda)+\epsilon)\,d\langle e_\lambda h^{1/2}, h^{1/2}\rangle$ holds true. Let $\epsilon>0$ be given. Firstly, observe that by the definition of $\zeta_1$ and $\zeta_{\log}$, we have $\zeta_{\log}(h)=\varphi_{\log}(h)\varphi_{ent}(h)^{-1}=\varphi_{\log}(h)h^{-1}\zeta_1(h)$. Thus, by the commutation assumption, we
have that $$\zeta_{\log}(h)^{1/2}g\zeta_{\log}(h)^{1/2} = \varphi_{\log}(h)h^{-1}\zeta_1(h)^{1/2}g\zeta_1(h)^{1/2} = \varphi_{\log}(h)h^{-1}h_\vartheta=\varphi_{\log}(h)p.$$Hence we may apply 
Lemma \ref{mainthm} to see that
\begin{eqnarray*}
&&\epsilon\tau(\chi_{(\epsilon, \infty)}(\zeta_{\log}(h)^{1/2}g\zeta_{\log}(h)^{1/2}))
+ \log(\epsilon)\|\zeta_1(h)^{1/2}g\zeta_1(h)^{1/2}\|_1\\
&=& \epsilon\tau(\chi_{(\epsilon, \infty)}(\varphi_{\log}(h)p))
+ \log(\epsilon)tr(hp)\\
&=& \epsilon\tau(\chi_{(1, \infty)}(\varphi_{\log}(h)(p/\epsilon))
+ \log(\epsilon)tr(hp)\\
&=& \epsilon\tau (\chi_{(1, \infty )}(h(p/\epsilon )\log ((p/\epsilon )+\I )))
+ \log(\epsilon)tr(hp).
\end{eqnarray*}
Since $h\in L^1(\qM)$ with $p$ affiliated to $\qM$, the operator $(p/\epsilon)(\log (p/\epsilon)+\I)$ will by the Borel functional calculus itself be affiliated with $\qM$, in which case $b=h(p/\epsilon)(\log (p/\epsilon)+\I)$ will then be a positive operator ``affiliated'' to $L^1(\qM)$ in the sense that it is affiliated with the crossed product and that $\theta_s(b)=e^{-s}b$ for each $s\in \mathbb{R}$. By \cite[Proposition II.4]{terp}, $b$ corresponds to a normal weight $\Phi_b$ on $\qM$. If we now apply \cite[Lemma II.5]{terp}, it follows that $\Phi_b(\I)= \tau (\chi_{(1, \infty)}(h(p/\epsilon)\log ((p/\epsilon)+\I)))$. Writing $e_N$ for $\chi_{[0,N]}(p)$, we next again appeal to \cite[Proposition II.4]{terp} to see that for each $N>0$, the weight $f\to\Phi_b(e_Nfe_N)$ corresponds to $e_Nbe_N$. All of these observations may now be combined with the normality of $\Phi_b$ and the definition of the tracial functional $tr$ (\cite[Definition II.13]{terp}), to see that
\begin{eqnarray*}
&&\epsilon\tau (\chi_{(1, \infty)}(h(p/\epsilon)\log ((p/\epsilon)+\I)))
+ \log(\epsilon)tr(hp)\\
&=& \epsilon\Phi_b(\I)+ \log(\epsilon)tr(hp)\\
&=& \epsilon\lim_{N\to\infty}\Phi_b(e_N)+ \log(\epsilon)tr(hp)\\
&=& \epsilon\lim_{N\to\infty}\tau (\chi_{(1, \infty)}(h(e_Np/\epsilon)\log((e_Np/\epsilon)+\I)))
+ \log(\epsilon)tr(hp)\\
&=& \epsilon\lim_{N\to\infty}tr(h(e_Np/\epsilon)\log((e_Np/\epsilon)+\I))
+ \log(\epsilon)tr(hp)\\
&=& \epsilon\lim_{N\to\infty}\langle (e_Np/\epsilon)\log((e_Np/\epsilon)+\I)h^{1/2}, h^{1/2}\rangle
+ \log(\epsilon)\langle ph^{1/2}, h^{1/2}\rangle\\
&=& \lim_{N\to\infty}\int_0^N \lambda\log((\lambda/\epsilon)+1)\,d\langle e_\lambda h^{1/2}, h^{1/2}\rangle
+ \log(\epsilon)\int_0^\infty\lambda\,d\langle e_\lambda h^{1/2}, h^{1/2}\rangle\\
&=& \int_0^\infty \lambda\log((\lambda/\epsilon)+1)\,d\langle e_\lambda h^{1/2}, h^{1/2}\rangle
+ \log(\epsilon)\int_0^\infty\lambda\,d\langle e_\lambda h^{1/2}, h^{1/2}\rangle\\
&=& \int_0^\infty \lambda\log(\lambda+\epsilon)\,d\langle e_\lambda h^{1/2}, h^{1/2}\rangle.
\end{eqnarray*}
This proves the claim required to establish the theorem. In obtaining the final equality, we silently used the facts that 
$\int_0^\infty \lambda\log((\lambda/\epsilon)+1)\,d\langle e_\lambda h^{1/2}, h^{1/2}\rangle$ either converges or diverges to $\infty$ and that we always have \newline 
$\int_0^\infty\lambda\,d\langle e_\lambda h^{1/2}, h^{1/2} \rangle =tr(hp)=tr(h_\vartheta)=\vartheta(\I)=1<\infty$.
\end{proof}

\begin{remark}
The full significance of Theorem \ref{5.6} will be discussed in Section 6. For now the important point to note here is that to define entropy for large systems (so for type III von Neumann algebras) we were here working within the new formalism, which is based on the distinguished pair of Orlicz spaces $\left\langle L^{\cosh -1}, L\log(L+1)\right\rangle$ - for details see \cite{[1]}, \cite{[2]}, \cite{[3]}. In particular,  the superalgebra $\cM$ was employed. In that way it is possible to define entropy for non-semifinite von Neumann algebras, and consequently to study thermodynamics for such systems. Furthermore, this should make clear in which way we avoided the problems discussed in \cite{OP} - see Theorem 6.10 of that monograph.

Now let $\qM$ be a semifinite von Neumann algebra and $\omega=\tau_\omega$ a tracial state. Let $\vartheta$ be a faithful normal state for which the Radon-Nikodym derivative $a$ described in Theorem \ref{2.1} belongs to the tracial space $[L\log(L+1)\cap L^1](\mathfrak{M},\tau_\omega)$ (see the prescription in for example section 1 of \cite{L} to see how this space is defined. When passing to the crossed product, it is known that in the case of semifinite algebras equipped with a trace (as is the case here), the crossed product $\mathcal{M}$ of $\qM$ with the modular automorphism group of $\tau_\omega$, is essentially just a copy of $\qM\otimes L^\infty(\mathbb{R})$ \cite[Part II, Proposition 4.2]{vD}. In particular, under this correspondence the canonical trace $\tau$ on $\cM$, may be identified with $\tau_\omega \otimes \int_\mathbb{R}\cdot e^{-t}\,dt$ (see section 2 of \cite{L}). This identification forms the background for the analysis in section 2 of \cite{L}, where certain quantities described by the pair $(\cM,\tau)$, may alternatively be described by the pair $(\qM,\tau_\omega)$. By Proposition 2.5 and Definition 3.4 of \cite{L}, $a$ corresponds to an element $g$ of $[L\log(L+1)\cap L^1](\mathfrak{M})^+=L^{ent}(\qM)^+$, which is of the form $g=a\otimes\varphi_{ent}(e^t)$. Again by \cite[Proposition 2.5]{L}, the 
operators $\zeta_{\log}(h)^{1/2}g\zeta_{\log}(h)^{1/2}$ and $\zeta_1(h)^{1/2}g\zeta_1(h)^{1/2}$ are respectively of the form 
$a\otimes\varphi_{\log}(e^t)$ and $a \otimes e^t$. Using the fact that $\tau_\omega \otimes \int_\mathbb{R}\cdot e^{-t}\,dt$, we may therefore apply \cite[Proposition 1.7]{FK} and \cite[Theorem 2.2]{L} to see that we will for any $\epsilon>0$ have that 
\begin{eqnarray*}
&&\epsilon\tau ( \chi_{(\epsilon, \infty )}(\zeta_{\log}(h)^{1/2}g\zeta_{\log}(h)^{1/2}))
+ \log(\epsilon)\|\zeta_1(h)^{1/2}g\zeta_1(h)^{1/2}\|_1\\
&&\qquad =\epsilon\tau(\chi_{(\epsilon, \infty)}(\zeta_{\log}(h)^{1/2}g\zeta_{\log}(h)^{1/2}))+ \log(\epsilon)\tau ( \chi_{(1, \infty )}(\zeta_1(h)^{1/2}g\zeta_1(h)^{1/2}))\\
&&\qquad =\tau_\omega(a\log (a/\epsilon +\I))+ \log(\epsilon)\tau_\omega(a)\\
&&\qquad =\tau_\omega (a\log (a +\epsilon\I ))
\end{eqnarray*}
So in this case the formula in the preceding definition corresponds exactly to the more familiar formula $\tilde{S}(\vartheta)=\inf_{\epsilon>0}\tau_\omega(a\log(a+\epsilon\I))= \tau_\omega(a\log(a))$.
\end{remark}

\section{Discussion}
As was noted in the introduction, the standard framework of classical statistical mechanics is based on the pair
\begin{equation}
\label{1d}
\langle L^{\infty}(\Gamma, \mu), L^1(\Gamma, \mu)\rangle,
\end{equation}

for a measure space $(\Gamma, \mu)$. Let us consider this point in detail. There are two ``extremal cases'' of a measure spaces which are employed in Physics. The first case is a countably totally atomic measure space while the second one is based on non-atomic measure.

Let us consider the first case. Then (\ref{1d}) reads
\begin{equation}
\label{2d}
\langle l^{\infty}(\Nn), l^1(\Nn)\rangle
\end{equation}
and then the states are described by
\begin{equation}
\label{3d}
\{ f \equiv (f_1,f_2,...) \in l^1; f\geq 0, \sum_i f_i = 1 \} \subset l^1.
\end{equation}
It is important to note that in (\ref{3d}) one has pure states and a general state is a convex combination of pure states. Furthermore, if in $l^1$ there are only finite sequences, so when $l^1 \equiv l^1(1,2,...,N)$, then Boltzmann's $W$-entropy follows from the recipe for the $H$-functional, provided that the probability distribution is uniform.

On the other hand, the second case leads to
\begin{equation}
\label{4d}
\langle L^{\infty}(\Gamma, d\mu), L^1(\Gamma, d\mu)\rangle
\end{equation}
with the states then given by
\begin{equation}
\label{5d}
\{ f \in L^1(\Gamma, d\mu); f\geq 0, \int f d\mu = 1 \}.
\end{equation}
where the reference measure $\mu$ is non-atomic.
It is crucial to note that in (\ref{5d}) there do not exist pure normal states. Therefore, if as in Boltzmann's theory, the reference measure is akin to Lebesgue measure in the sense of being non-atomic, an examination of the behaviour of the H-functional with respect to pure normal states is an example of ``ill posed'' problem.

Turning to quantization, the von Neumann entropy (based on Dirac's formalism) uses pure states and hence is related to (\ref{3d}).
Contrariwise, a general quantum system, cf Sections 4 and 5, needs to allow for type III von Neumann algebras. It is known that a type III factor $\mathfrak{M}$ does not have normal pure states. Therefore,  type III factors have that mathematical feature in common with the abelian von Neumann algebra $L^{\infty}(\Gamma, d\mu)$ given in (\ref{4d}) which also has no pure normal states.

Consequently the entropy $\widetilde{S}(\vartheta)$ defined in Definition \ref{defent} in the previous section, has more in common with the H-functional, than with the von Neumann entropy. Before proceeding further let us pause to make some important remarks on the nature of states.
\begin{remark}
\begin{itemize}
\item Although in (\ref{5d}) there are no pure normal states, any probability measure is an accumulation point of the convex hull of Dirac measures. This property of classical measure theory (the weak-$^*$ Riemann approximation property) implies that for a continuous classical system all states are separable, see \cite{WAM}. Furthermore, interpreting a Dirac measure as a pure state, one can again say that a convex combination of pure states leads to a state.
\item A non-commutative integral, does not in general have the weak-$^*$ Riemann approximation property. Thus, there is a ``room'' for entangled states, see \cite{WAM}.
\item As a type III von Neumann algebra has no pure normal states, the question of whether $\widetilde{S}(\vartheta)$ is zero only for pure states has no sense.
\item Finally, to avoid any confusion, we note that each $W^*$ algebra is also a $C^*$ algebra with unit. So, the set of all states of such algebra has pure states (by the Krein-Milmann theorem) but these states are not normal!
\end{itemize}
\end{remark}

Turning to the H-functional, we note that it is an easy observation that $H({\chi}_{{}_{\Gamma_{0}}}) = 0$, where ${\chi}_{{}_{\Gamma}}$ is a characteristic function given by a measurable subset $\Gamma_0 \subset \Gamma$.
Clearly, $\chi_{{}_{\Gamma_{0}}}$ is a projector in $L^{\infty}(\Gamma, d\mu)$. However, we are again not able to simplistically translate this property of the H-functional to general quantum systems. To see this, let us assume that a projector $P$ is in 
$L^1(\qM)$. This means that $\theta_s(P) = e^{-s}P$, for any $s$, where $\theta_s$ stands for the dual action of $\Rn$ on $\cM$. But, one has also
$$\theta_s(P) = \theta_s (P \cdot P) = \theta_s(P) \theta_s(P) = e^{-2s}P,$$
which is only possible for $s=0$. The problem here is that the entropy defined in the previous section only makes sense for elements of $L^{ent}(\qM)$. So to make sense of the ``entropy'' of a projector $P$, we first have to embed $P$ into the space $L^{ent}(\qM)$. If indeed $\omega(P)<\infty$ (where $\omega$ is the a priori given faithful normal semifinite weight on $\qM$), then $g=\varphi^{1/2}_{ent}(h)(P)\varphi^{1/2}_{ent}(h)$ (where $h=\frac {d\tilde{\omega}}{d\tau}$) belongs to $L^{ent}(\qM)$ whenever $\omega(P)<\infty$ \cite[Proposition 3.3]{L}. The quantum analogue of   $H({\chi}_{{}_{\Gamma_{0}}})$ would then be given by applying the prescription in Definition \ref{defent} with $g$ as above. Hence, to sum up:
\vskip 0.5cm
\textit{
Some basic properties of classical entropy S as well as of the H-functional have no quantum counterparts in the theory based on type III von Neumann algebras. In particular, the entropy $\widetilde{S}(\vartheta)$ does not exhibit some of the properties typical of its classical counterparts. This is not surprising as entropy being a function of states, should at some level reflect the structure of the state space of the considered system.}

\vskip 0.5cm
However, despite the above differences between the classical and quantum descriptions, the new approach presented here offers a solution to old open problems.  It is well known that in classical statistical mechanics, the Gibbs Ansatz $Z^{-1}e^{-\beta H}$, is designed to describe a classical canonical equilibrium state and that essential thermodynamical information is contained in the partition function $Z = \int e^{-\beta H}d\Gamma$. Here $H$ stands for the Hamiltonian of the considered system, and $\beta$ for the ``inverse'' temperature. The quantization of $e^{-\beta H}$ means that now $H$ is the Hamiltonian operator, and hence to have a quantum state within Dirac's formalism, we require that $e^{-\beta H}$ should then be a trace class operator. But this is only the case when, at the very least, $H$ has a pure point spectrum with accumulation point at infinity. Unfortunately, even the Hamiltonian of the Hydrogen atom does not fulfill this requirement.

To see that this question has an easy solution in the presented framework we note:
\begin{enumerate}
\item As we have seen in Section 4, there is $h_{\omega} = \frac {d\tilde{\omega}}{d\tau}$ where we are using the ``language'' of non-commutative integration theory, cf the previous sections and/or see \cite{terp}.
\item $h_{\omega}^{it}$ can be identified with $\lambda(t)$.
\item $\theta_s(\lambda(t)) = e^{-ist} \lambda(t)$
\item Writing $\lambda(t) = e^{-iHt}$ one has: 
$$ \theta_s(e^{-iHt}) = e^{-ist} e^{-iHt} = e^{-i(H+s\jed)t}$$
\item Thus $\theta_s(H) = H+ s\jed$
\item Consequently  $\theta_s(e^{-H}) = e^{-s} e^{-H}$ and $e^{-H} \in L^1(\qM)$!
\item In the above $\beta = 1$, which follows from the standard scaling of temperature in the KMS theory, cf Chapter 5 in \cite{BR}.
\end{enumerate}
 Consequently, \textit{the quantum analogue of the Gibbs Ansatz is well defined as an element of $L^1(\qM)$}. Furthermore, as there is a linear bijective isometry between $L^1(\qM)$ and $\qM_*$, we obtain a well defined normal functional on $\qM$. In particular, \textit{the quantum analogue of the partition function is also well defined}. 

Turning to the entropy $\widetilde{S}(\vartheta)$, we wish to get a better understanding of its nature. To this end we will consider the important case when $\vartheta$ is the reference state $\omega$. We remind that $\omega$ is a faithful normal state and by  Takesaki's theorem, cf Section 2, $\omega$ is a KMS state in terms of the modular dynamics. In other words, $\omega$ describes the given equilibrium state and we wish to compute the entropy of such a state. Furthermore, $\widetilde{S}(\omega)$ being related to the equilibrium state $\omega$ is a candidate for quantum thermodynamical entropy.

We have just seen that $e^{-H}$ is in $L^1(\qM)$. But $\widetilde{S}(\vartheta)$ was defined for functionals of the form $\zeta_1(h)g\zeta_1(h)$, where $h \equiv e^{-H}$, and $g \in L^{ent}(\qM)$. So we must examine what this requirement would mean for $e^{-H}$.

We remind, cf. Definition \ref{defent}, that a state ${\vartheta}$ is regular if $\frac{D\tilde{\vartheta}}{D\tau}$ is of the form $\zeta_1(h)g\zeta_1(h)$, where $h \equiv e^{-H}$, and $g \in L^{ent}(\qM)$. Hence we wish to have 
\begin{equation}
e^{-H} \equiv h = \left(\frac{h}{\varphi_{ent}(h)}\right)^{\frac{1}{2}} g \left(\frac{h}{\varphi_{ent}(h)}\right)^{\frac{1}{2}}.
\end{equation}
Thus $g = \varphi_{ent}(h)$, and hence
\begin{equation}
\widetilde{S}(\omega) = \inf_{\epsilon >0}\left[\epsilon\, \tau\left(\chi_{(\epsilon, \infty)}(\zeta_{\log}(h))^{\frac{1}{2}} \varphi_{ent}(h) (\zeta_{\log}(h))^{\frac{1}{2}}\right) + \log \epsilon\, ||\zeta_1(h)^{\frac{1}{2}} \varphi_{ent}(h)\zeta_1(h)^{\frac{1}{2}}||_1 \right]
\end{equation}

We note that (see \cite{FK}),
\begin{equation}
\tau\left(\chi_{(\epsilon, \infty)}(|T|)\right) = \epsilon^{-1} ||T||_1,
\end{equation}
for $T \in L^1(\qM)$. Thus
\begin{eqnarray*}
\widetilde{S}(\omega) &=& \inf_{\epsilon >0}\left[\epsilon\, \tau\left(\chi_{(\epsilon, \infty)}(\zeta_{\log}(h))^{\frac{1}{2}} \varphi_{ent}(h) (\zeta_{\log}(h))^{\frac{1}{2}}\right) + \epsilon \log \epsilon \,\tau\left(\chi_{(\epsilon, \infty)}(\zeta_1(h)^{\frac{1}{2}} \varphi_{ent}(h)\zeta_1(h)^{\frac{1}{2}})\right) \right]\\
&=& \inf_{\epsilon >0}\left[\epsilon\,\tau\left(\chi_{(\epsilon, \infty)}(\varphi_{\log}(h))\right) + \epsilon \log \epsilon\, \tau\left(\chi_{(\epsilon, \infty)}(h)\right) \right]
\end{eqnarray*}
Now observe that $\varphi_{\log}$ is a continuous strictly increasing function which is 0 at 0. So $t\geq \epsilon>0$ if and only if $\varphi_{\log}(t)\geq \varphi_{\log}(\epsilon)>0$, with $\varphi_{\log}(t)\to\infty$ as $t\to\infty$. If we combine this fact with the Borel functional calculus, it is clear that $\chi_{(\epsilon, \infty)}(\varphi_{\log}(h))=\chi_{(\varphi^{-1}_{\log}(\epsilon), \infty)}(h)$. Consequently 
\begin{eqnarray}
\widetilde{S}(\omega) &=& \inf_{\epsilon >0}\left[\epsilon \,\tau\left(\chi_{(\varphi^{-1}_{\log}(\epsilon), \infty)}(h)\right) + \epsilon \log \epsilon \,\tau\left(\chi_{(\epsilon, \infty)}(h)\right) \right]\nonumber\\
&=& \inf_{\epsilon >0}\left[\frac{\epsilon}{\varphi^{-1}_{\log}(\epsilon)} + \log(\epsilon)\right].\|h\|_1
\end{eqnarray}
As $\varphi^{-1}_{\log}(t) = \frac{1}{\Psi^{-1}_{\log}(\frac{1}{t})}$ where $\Psi_{\log}(t)=t\log(t+1)$, we have
\begin{equation}
\varphi^{-1}_{\log}(t) = \frac{1}{\Psi_{\log}(\frac{1}{t})} = \frac{t}{\log(\frac{1}{t} + 1)}.
\end{equation}
So
\begin{equation}
\widetilde{S}(\omega) = \inf_{\epsilon >0} \left[\log\left(\frac{1}{\epsilon} + 1\right) + \log \epsilon\right] ||h||_1 = \left[\inf_{\epsilon >0} \log(1 + \epsilon)\right] ||h||_1 = 0.
\end{equation}

In commenting on this result, we note that in classical Physics, the entropy is an extensive thermodynamical quantity.
The central question then becomes: whether the quantum entropy $\widetilde{S}(\vartheta)$ has the same property. To answer this question we begin by taking closer look at techniques used in definition of $\widetilde{S}(\vartheta)$. The first observation is that, from the very beginning, we employed the approach relevant to a description of large systems, i.e. those systems of statistical physics which can be obtained by thermodynamical limit. The next observation is that Tomita-Takesaki theory was the basic ingredient of our analysis.

It is crucial to note that in the representation induced by a KMS state, basic relations of Tomita-Takesaki theory for finite volume systems survive the thermodynamical limit. In particular, the equilibrium state vector is an eigenvector of $h$ corresponding to eigenvalue $1$ -- for more details see Sections V.1.4 and V.2.3 in \cite{haag}. Furthermore, we have already noted, cf. remark given prior to Theorem \ref{5.6}, that in the presented approach, the state $\omega$ (so a quantum counterpart of probability measure) was used as  a reference measure. On the other hand, in classical statistical physics, the entropy per unit volume is given by $\frac{S(\varrho_{\Lambda})}{V(\Lambda)}$, where $V(\Lambda)$ stands for the volume of the region $\Lambda$. Note that $V(\Lambda)$ is taken with respect to the reference measure (in classical statistical physics, it is the Lebesgue measure). However, having a probability measure as the reference measure one gets $V(\Lambda) = 1$. In other words, $\widetilde{S}(\vartheta_{\Lambda})$ can be considered as the entropy per unit volume. Consequently, the definition of entropy proposed in the paper in together with the regularization procedure, incorporates some basic ideas of thermodynamic limits. 

Thus, the entropy $\widetilde{S}(\vartheta)$, defined in terms suitable for large systems, should share its properties with the density of entropy.
All of this points to the fact that $\widetilde{S}(\vartheta)$ can be considered as an intensive quantity.

To get some intuition about density entropy properties, it seems to be useful to note that the density of entropy for quantum lattice systems is taking its values in the interval $[0, N < \infty]$, where $N$ is the dimension of Hilbert space associated with each site of quantum spin system -- see Section 6.2.4 in \cite{BR}. Finally, the important point to note here is that the result $\widetilde{S}(\omega) = 0$, is compatible with the interpretation of the relative entropy as a ``measure'' of distance between two states, cf Theorem \ref{5.6}. To sum up, we can say that the obtained result $\widetilde{S}(\omega) = 0$ is expected.

With a suitable concept of entropy for regular states of general quantum systems thus having been identified, the challenge now is to develop computational algorithms for this entropy.

\vskip 0.5cm

\section{Conclusions}
One of the challenges of contemporary physics is to derive the macroscopic properties of matter from the quantum laws governing the microscopic description of a system. On the other hand, thermodynamics being a prerequisite for (quantum) statistical physics, provides laws governing the behaviour of macroscopic variables. It is well known that entropy is a crucial concept for this scheme.

Knowing that statistical physics deals with large systems (so systems with infinite degrees of freedom) we proposed a concise approach to entropy. It was done in operator algebraic language. This language is indispensable as on the one hand it is the basis for noncommutative integration theory, and on the other von Neumann algebra of type III are acknowledged to be the correct formalism for large quantum systems. Consequently, using the algebraic approach, a consistent dynamical description of entropy was achieved. 

It is worth pointing out that our results can be considered as the first step in getting genuine quantum thermodynamics for general quantum systems.

\end{document}